\spnewtheorem{observation}[theorem]{Observation}{\bfseries}{\itshape}
\spnewtheorem{goodsit}{Good Situation}{\bfseries}{\itshape}
\spnewtheorem{gengoodsit}{Good Situation}{\bfseries}{\itshape} 
\spnewtheorem{clm}[theorem]{Claim}{\bfseries}{\itshape}
\newcommand\binstretch{{\sc Online Bin Stretching}\xspace}
\newcommand\binpacking{{\sc Bin Packing}\xspace}
\newcommand\Nat{\mathbb{N}}
\newcommand\eps\varepsilon
\newcommand\calA{{\mathcal A}}
\newcommand\calC{{\mathcal C}}
\newcommand\calL{{\mathcal L}}
\newcommand{\before}{\leftarrow}
\newcommand*\samethanks[1][\value{footnote}]{\footnotemark[#1]}
\def\O{{\mathcal{O}}}
\let\csc=\textsc
\def\GSFF#1{{\rm\csc{GSFF(#1)}}}
\def\FF{{\rm\csc{First Fit}}\xspace}
\def\tbalg{\textsc{Evasive}\xspace}
\def\indentskip{\hskip 1.5em}
\def\indentskiptwodigit{\hskip 1em}
\def\step#1{{(\ref{fp:#1})}}
\def\eq#1{{(\ref{eq:#1})}}
\def\gs#1{GS\ref{lem:gs#1}}
\long\def\algobox#1{\smallskip
  \noindent
~~\hbox{\fbox{\parbox[c]{0.95\textwidth}{#1}}}
\smallskip
}
\def\frombox#1{\hbox to 1em{{\bf #1}\hss}}
\def\tobox#1{\hbox to 2.3em{{\bf #1}\hss}}
\def\arrow{\hbox to 1.5em{$\rightarrow$}}
\newcommand*{\defeq}{\mathrel{\rlap{%
                     \raisebox{0.3ex}{$\m@th\cdot$}}%
                     \raisebox{-0.3ex}{$\m@th\cdot$}}%
                     =}
\newcommand\It{{\mathcal I}}
\renewcommand\Game{{\mathrm{BSG}}}
\newcommand\algo{\textsc{Algorithm}\xspace}
\newcommand\adversary{\textsc{Adversary}\xspace}
\newcommand\evaladv{\textsc{EvaluateAdversary}\xspace}
\newcommand\evalalg{\textsc{EvaluateAlgorithm}\xspace}
\newcommand\Main{\textsc{Main}\xspace}
\newcommand\Test{\textsc{Test}\xspace}
\newcommand\bfd{\textsc{Best Fit Decreasing}\xspace}
\newlength{\elseskiplen}
\newcommand\elseskip{\hskip\elseskip}
\begin{document}

\title{Online Bin Stretching with Three Bins}

\author{Martin B\"{o}hm\inst{1}\fnmsep\thanks{
Supported by the project
14-10003S of GA \v{C}R and by the GAUK project 548214.
},
Ji\v{r}\'{\i} Sgall\inst{1}\fnmsep\samethanks,
Rob van Stee\inst{2}
\and
Pavel Vesel\'y\inst{1}\fnmsep\samethanks
}

\institute{
Computer Science Institute of Charles University,
Prague, Czech Republic.
\email{\{bohm,sgall,vesely\}@iuuk.mff.cuni.cz}.
\and
Department of Computer Science,
University of Leicester, Leicester, UK.
\email{rob.vanstee@leicester.ac.uk}. 
}

\maketitle
\thispagestyle{plain}
\begin{abstract}

\binstretch is a semi-online variant of bin packing in which the
algorithm has to use the same number of bins as an optimal packing,
but is allowed to slightly overpack the bins. The goal is to minimize
the amount of overpacking, i.e., the maximum size packed into any bin.

We give an algorithm for \binstretch with a stretching factor of $11/8 = 1.375$
for three bins. Additionally, we present a
lower bound of $45/33 = 1.\overline{36}$ for \binstretch on three bins 
and a lower bound of $19/14$ for four and five bins that were discovered using a
computer search.

\end{abstract}

\section{Introduction}

The most famous algorithmic problem dealing with online assignment is
arguably {\sc Online Bin Packing}. In this problem, known since the
1970s, items of size between $0$ and $1$ arrive in a
sequence and the goal is to pack these items into the least number of
unit-sized bins, packing each item as soon as it arrives.

\binstretch, which was introduced by Azar and Regev in 1998~\cite{azar98,azar01}, deals with a
similar online scenario. Again, items of size between $0$ and $1$
arrive in a sequence, and the algorithm needs to pack them as soon as
each item arrives, but it has two advantages: (i) The packing
algorithm knows $m$, the number of bins that an optimal offline
algorithm would use, and must also use only at most $m$ bins, and (ii)
the packing algorithm can use bins of capacity $R$ for some
$R\geq 1$. The goal is to minimize the stretching factor $R$.

While formulated as a bin packing variant, \binstretch can also be
thought of as a semi-online scheduling problem, in which we schedule
jobs in an online manner on exactly $m$ machines, before any execution
starts.  We have a guarantee that the optimum offline algorithm could
schedule all jobs with makespan $1$. Our task is to present an online
algorithm with makespan of the schedule being at most $R$.

\smallskip\noindent
{\bf Motivation.} We give two of applications of \binstretch.

{\it Server upgrade.} This application has first appeared
in~\cite{azar98}. In this setting, an older server (or a server
cluster) is streaming a large number of files to the newer server
without any guarantee on file order. The files cannot be split between
drives. Both servers have $m$ disk
drives, but the newer server has a larger capacity of each drive. The
goal is to present an algorithm that stores all incoming files from
the old server as they arrive.

{\it Shipment checking.} A number $m$ of containers arrive at a
shipping center. It is noted that all containers are at most $p \le 100$ percent
full. The items in the containers are too numerous to be individually
labeled, yet all items must be unpacked and scanned for illicit and dangerous
material. After the scanning, the items must be speedily
repackaged into the containers for further shipping.
In this scenario, an algorithm with stretching factor $100/p$
can be used to repack the objects into containers in an online manner.

\smallskip
\noindent 
{\bf History.}  
\binstretch was proposed by Azar and Regev~\cite{azar98,azar01}.
Already before this, a matching upper and lower bound of $4/3$ for two bins
had appeared~\cite{KeKoST97}.
Azar and Regev extended this lower bound to any number
of bins and gave an online algorithm with a stretching factor $1.625$.

The problem has been revisited recently, with both lower bound
improvements and new efficient algorithms. On the algorithmic side,
Kellerer and Kotov~\cite{kellerer2013} have achieved a stretching
factor $11/7 \approx 1.57$ and Gabay et al.~\cite{gabay2013} 
have achieved $26/17 \approx
1.53$. The best known general algorithm with stretching factor $1.5$
was presented by the authors of this paper in \cite{bsvsv16}.

In the case with only three bins, the previously best
algorithm was due to Azar and Regev~\cite{azar98}, with a stretching
factor of $1.4$.

On the lower bound side, the lower bound $4/3$ of~\cite{azar98} was
surpassed only for the case of three bins by Gabay et
al.~\cite{gabay2013lbv2}, who show a lower bound of $19/14\approx
1.357$, using an extensive computer search. The preprint
\cite{gabay2013lbv2} was updated in 2015 \cite{gabay2013lbv3} to
include a lower bound of $19/14$ for four bins.

\smallskip\noindent
{\bf Our contributions.} In Section~\ref{sec:three118} we present an
algorithm for three bins of capacity $11/8 = 1.375$.  This is the
first improvement of the stretching factor $1.4$ of Azar and
Regev~\cite{azar98}.
In Section~\ref{sec:lowerbound}, we present a new lower bound of
$45/33 = 1.\overline{36}$ for \binstretch on three bins, along with a
lower bound of $19/14$ on four and five bins which is the first
non-trivial lower bound for four and five bins. We build on the
paper of Gabay et al.~\cite{gabay2013lbv2} but significantly change the
implementation, both technically and conceptually. The lower bound of $19/14$
for four bins is independently shown in \cite{gabay2013lbv3}.

A preliminary version of this work appeared in WAOA 2014
\cite{bsvsv14} and SOFSEM 2016~\cite{bohm16}.

\smallskip\noindent
{\bf Related work.} The NP-hard problem \binpacking was originally
proposed by Ullman~\cite{ullman71} and Johnson~\cite{johnson73} in the
1970s. Since then it has seen major interest and progress, see the
survey of Coffman et al.~\cite{coffman13} for many results on
classical Bin Packing and its variants. While our problem can be seen
as a variant of {\sc Bin Packing}, note that the algorithms cannot
open more bins than the optimum and thus general results for
\binpacking do not translate to our setting.

As noted, \binstretch can be formulated as the online scheduling on $m$ identical
machines with known optimal makespan. Such algorithms were studied and
are important in designing constant-competitive algorithms without the
additional knowledge, e.g., for scheduling in the more general model
of uniformly related machines~\cite{AsAFPW97,BeChKa00,EbJaSg09}.

For scheduling, also other types of semi-online algorithms are
studied. Historically first is the study of ordered sequences with
non-decreasing processing times~\cite{Graham69}. Most closely
related is the variant with known sum of all processing times studied
in~\cite{KeKoST97} and the currently best results are a lower bound of
$1.585$ and an algorithm with ratio $1.6$, both
from~\cite{DBLP:journals/tcs/AlbersH12}. Note that this shows,
somewhat surprisingly, that knowing the actual optimum gives a
significantly bigger advantage to the online algorithm over knowing
just the sum of the processing times (which, divided by $m$, is a
lower bound on the optimum).

\smallskip\noindent
{\bf Definitions and notation.} Our main problem, \binstretch, can be
described as follows: 

\noindent
{\bf Input:} an integer $m$ and a sequence
of items $I=i_1, i_2, \ldots$ given online one by one. Each item has
a {\it size} $s(i) \in [0,1]$ and must be packed immediately
and irrevocably. It is guaranteed that 
there exists a packing of all items in $I$ into $m$ bins of capacity $1$.

\noindent
{\bf Parameter:} The {\it stretching factor} $R\geq 1$.

\noindent
{\bf Output:} Partitioning (packing) of $I$ into bins $B_1,\ldots,B_m$
so that $\sum_{i\in B_j}s(i)\leq R$ for all $j=1,\ldots,m$.


\noindent
{\bf Goal:} Design an online algorithm with the stretching factor $R$
as small as possible which packs all input sequences satisfying the
guarantee.
\smallskip

For a bin $B$, we define the {\it size of the bin} $s(B) = \sum_{i \in
  B} s(i)$.  Unlike $s(i)$, $s(B)$ can change during the course of the
algorithm, as we pack more and more items into the bin. To easily
differentiate between items, bins and lists of bins, we use lowercase
letters for items ($i$, $b$, $x$), uppercase letters for bins and
other sets of items ($A$, $B$, $X$), and calligraphic letters for
lists of bins ($\calA$, $\calC$, $\calL$).

We rescale the item sizes and bin
capacities for simplicity. Therefore, in our setting, each item
has an associated size $s(i) \in [0,k]$, where $k \in \Nat$ is also the
capacity of the bins which the optimal offline algorithm uses. The
online algorithm for \binstretch uses bins of capacity
$t\in\Nat$, $t\geq k$. The resulting stretching factor is thus
$t/k$.

\section{Algorithm}\label{sec:three118}


We scale the input sizes by $16$. The stretched bins in our setting
therefore have capacity $22$ and the optimal offline algorithm can
pack all items into three bins of capacity $16$ each.
We prove the following theorem.

\begin{theorem}\label{thm:3binsalgo}
There exists an algorithm that solves
\binstretch for three bins with stretching factor $1+3/8=1.375$.
\end{theorem}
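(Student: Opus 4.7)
The plan is to design a classification-based greedy algorithm packing items of size at most $16$ into three bins of capacity $22$. I would split items around the natural thresholds: \emph{small} ($s \le 6$, which fits on top of any bin whose content is at most $16$, since $16+6 = 22$), \emph{medium} ($6 < s \le 11$, so that two of them can coexist in a stretched bin but three generally cannot), and \emph{large} ($s > 11$, which must be alone in a stretched bin). A second useful threshold is $s > 8$: any such item occupies its own OPT bin, so there are at most three items of size $>8$ in the whole input. The algorithm places a new item using class-dependent rules---roughly, a large item goes onto the least-loaded bin; a medium item is paired with an existing medium whenever the pair fits, and otherwise opens a bin; a small item is placed by a First-Fit variant that prefers the fullest bin, so as to keep at least one bin reserved for absorbing a future medium or large arrival.

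The proof would proceed by induction on the number of items processed, showing that after each step the three bins lie in one of a small family of \emph{good situations}: invariants describing, up to permutation, the sorted bin loads together with the number of medium and large items each bin carries. The base case is the empty configuration. For the inductive step I would go through each good situation paired with each size class of arriving item, verify that the algorithm's placement rule sends the configuration into another good situation, and, crucially, check that the chosen bin does not exceed capacity $22$. The final capacity bound is then immediate from the definition of the good-situation family.

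The cases where a placement \emph{appears} to fail would be ruled out by two global counting arguments derived from the OPT guarantee: the total item size is at most $48$, and the number of items of size $>8$ is at most $3$. A typical argument shows that if an item of size $s$ cannot be placed, then each of the three bins already has content exceeding $22 - s$, which combined with $s$ would force the total to exceed $48$; alternatively, the three bins each already contain a heavy item, which by the $8$-threshold counting leaves enough free space for the current item in one of them.

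The main obstacle is the sheer size of the case analysis: three bins, three (or four, if one subdivides the medium class at $8$) size classes, and several good-situation invariants give many sub-cases that have to be verified individually. The tightest accounting is needed at the class boundaries $s = 6$, $s = 8$, and $s = 11$, and in particular for medium items in $(8, 11]$, which are medium for the algorithm but heavy in OPT. Keeping the description in terms of sorted bin loads and class counts, rather than explicit triples, and factoring the case analysis by which good situation one is in, is what should make the argument tractable and the algorithm description compact.
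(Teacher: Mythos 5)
Your proposal shares the paper's general architecture---a list of ``good situations'' from which packing can always be completed, plus a case analysis driven by the offline guarantee---but the concrete rules and invariants you sketch would not go through, and the gaps sit exactly where the difficulty of the problem lies. Your rule for small items (First-Fit preferring the fullest bin) fails on the instance the paper itself uses as motivation: twenty items of size $0.5$ accumulate in one bin to load $10$, and three subsequent items of size $12.5$ then force some bin to $22.5$, even though OPT packs each $12.5$-item with $3.5$ worth of small items. Symmetrically, spreading small items over two bins past load $6$ each is killed by two items of size $16$. The working algorithm has to thread a needle here: it caps \emph{two} bins at load $4$ from the very start, treats the band $[4,6]$ as an immediately winning configuration, and in the main branch uses a cap $q=9+\frac12(s(A)+s(C))$ that moves with the current loads. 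No fixed classification of items at thresholds $6$, $8$, $11$ produces these rules, and the paper explicitly identifies ``when to start the third bin and when to cross load $6$'' as the central design problem.

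The invariant family and the infeasibility arguments are also too coarse as stated. The good situations that actually close the induction are inequalities such as $s(A)+s(B)\ge 15+\frac12 s(C)$, not sorted loads with class counts; and the terminal contradiction in the hard branch requires tracking four specific past items $e,j,r,x$ (none of which need exceed size $6$) and proving pairwise lower bounds such as $s(e)+s(r)>6.8$ by summing about a dozen ``no good situation was reached'' inequalities with chosen multipliers. The resulting thresholds ($6.8$, $9.2$, $12.8$) are not class boundaries; they emerge from an LP infeasibility and cannot be recovered from counts of items above $6$, $8$, or $11$. Finally, your volume argument for an unplaceable item of size $s$ gives $3(22-s)+s=66-2s>48$, which is a contradiction only for $s<9$; for items in roughly $[9,11]$---the ones that actually threaten the algorithm, being ``medium'' for you but heavy for OPT---you need the structural argument that specific earlier items cannot share an OPT bin with the current one, and that is where essentially all of the work in the paper is done.
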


The three bins of our setting are named $A$, $B$, and $C$.
We exchange the names of bins sometimes during the course of the algorithm.

A natural idea is to try to pack first all items in a single bin, as
long as possible. In general, this is the strategy that we
follow. However, somewhat surprisingly, it turns out that from the
very beginning we need to put items in two bins even if the items as
well as their total size are relatively small. 

It is clear that we have to be very cautious about exceeding a load of
6. For instance, if we put 7 items of size 1 in bin $A$, and 7 such items in $B$,
then if two items of size 16 arrive, the algorithm will have a load of 
at least 23 in some bin. Similarly, we cannot assign too much to a single bin:
putting 20 items of size 0.5 all in bin $A$ gives a load of 22.5 somewhere if 
three items of size 12.5 arrive next. (Starting with items of size 0.5
guarantees that there is a solution with bins of size 16 at the end.) 

On the other hand, it is useful to keep one bin empty for some time; many
problematic instances end with three large items such that one of them has
to be placed in a bin that already has high load. Keeping one bin free ensures
that such items must have size more than 11 (on average), which limits the
adversary's options, since all items must still fit into bins of size 16.

Deciding when exactly to start using the third bin and when to cross the
threshold of 6 for the first time was the biggest challenge in designing
this algorithm: both of these events should preferably be postponed as long
as possible, but obviously they come into conflict at some point.

\subsection{Good situations}\label{sec:gs}
Before stating the algorithm itself, we list a number of {\it good
  situations} (GS). These are configurations of the three bins which
allow us to complete the packing regardless of the following input.

It is clear that the identities of the bins are not important here;
for instance, in the first good situation, all that we need is that
\emph{any} two bins together have items of size at least 26. We have
used names only for clarity of presentation and of the proofs.

\begin{definition}
A \textit{partial packing} of an input sequence $S$
is a function $p: S_1 \to \{A,B,C\}$ that assigns a bin to each
item from a prefix $S_1$ of the input sequence $S$.
\end{definition}

\begin{goodsit}\label{lem:gs1}
Given a partial packing such that $s(A) + s(B) \geq 26$ and $s(C)$ is
arbitrary,
there exists an online algorithm that packs all remaining items into
three bins of capacity $22$.
\end{goodsit}

\begin{proof}
Since the optimum can pack into three bins of size $16$, the total
size of items in the instance is at most $3\cdot 16=48$. If two bins
have size $s(A) + s(B) \geq 26$, all the remaining items (including
the ones already placed on $C$) have size at most $22$.  Thus we can
pack them all into bin $C$.
\qed
\end{proof}

\begin{goodsit}\label{lem:gs2}
Given a partial packing such that $s(A) \in [4, 6]$ and $s(B)$ and
$s(C)$ are arbitrary, there exists
an online algorithm that packs all remaining items into three bins
of capacity $22$.
\end{goodsit}

\begin{proof}
Let $A$ be the bin with size between $4$ and $6$ and $B$ be one of the
other bins (choose arbitrarily). Put all the items greedily into
$B$. When an item does not fit, put it into $A$, where it fits, as
originally $s(A)$ is at most $6$. Now the size of all items in $B$
plus the last item is at least $22$. In addition, $A$ has items of
size at least $4$ before the last item by the assumption.  Together we
have $s(A) + s(B) \geq 26$, allowing us to apply \gs1.
\qed
\end{proof}

From now on, we assume that each bin $X\in\{A,B,C\}$ satisfies $s(X) \not\in [4,6]$, otherwise we reach \gs2.

\begin{goodsit}\label{lem:gs3}
Given a partial packing such that $s(A) \ge 15$ and either
{\rm(i)} $s(B)+s(C)\ge 22$ or
{\rm(ii)} $s(C)<4$ and $s(B)$ is arbitrary,
there exists an online algorithm that packs all remaining items into
three bins of capacity $22$. 
\end{goodsit}

\begin{proof}
(i) We have
$\max(s(B),s(C))\ge11$, so 
we are in \gs1 on bins
$A$ and $B$ or on bins $A$ and $C$. 

(ii) We pack arriving items into $B$. If $s(B) \geq
11$ at any time, we apply \gs1 on bins $A$ and $B$. Thus we can
assume $s(B) < 11$ and we cannot continue packing into $B$ any
further. This implies that an item $i$ arrives such that $s(i) >
11$. As $s(C)<4$, we pack $i$ into it and apply \gs1 on
bins $A$ and $C$.\qed
\end{proof}

\begin{goodsit}\label{lem:gs4}
Given a partial packing such that 
$s(A) +s(B) \geq 15+\frac{1}{2}s(C)$, $s(B) < 4$, and $s(C) < 4$,
there exists an online algorithm that packs all remaining items into
three bins of capacity $22$.
\end{goodsit}

\begin{proof}
Let $c$ be the value of $s(C)$ when the conditions of this good situation
hold for the first time.
We run the following algorithm until we reach \gs1 or \gs3:

\smallskip
\algobox{
\begin{compactenum}[(1)]
\item If the incoming item $i$ has $s(i)\ge 11-\frac{1}{2}c$, pack $i$ into $B$.
\item Else, if $i$ fits on $A$, pack it there.
\item Otherwise pack $i$ into $C$.
\end{compactenum}
}
\medskip

If at any time an item $i$ is packed into $B$ (where it always fits),
then $s(A)+s(B)\ge 26$ and we reach \gs1. In the event that no item $i$
is packed into $B$, we reach \gs3 (with $B$ in the role of $C$) whenever the algorithm brings the size
of $A$ to or above 15.

The only remaining case is when $s(A)<15$ throughout the algorithm and
several items with size in the interval $I \defeq
(22-s(A),11-\frac{1}{2}c)$ arrive. These items are packed into $C$.
Note that $I \subseteq (7,11)$ and that the lower bound of $I$ may decrease
during the course of the algorithm.

The first two items with size in $I$ will fit together,
since $2(11-\frac{1}{2}c) + c = 22$. With two such items packed into $C$, we know that the load $s(A) + s(C)$ is at least $s(A) + 2(22-s(A)) = 44 - s(A) > 29$ and we have reached \gs1, finishing the analysis. \qed

\end{proof}

\begin{goodsit}\label{lem:gs5}
Given a partial packing such that an item $a$ with $s(a)>6$ is packed
into bin $A$, $s(B) \in [3,4)$, and $C$ is empty,
there exists an algorithm that packs all remaining items into
three bins of capacity $22$.
\end{goodsit}

\begin{proof}
Pack all incoming items into $A$ as long as it possible. 
If $s(A) \geq 12$, we have \gs4, and so we assume the contrary.
Therefore, $s(A) < 12$ and an item $i$ arrives which cannot be packed into $A$. 

Place $i$ into $B$. If $s(i) \geq 12$, we apply \gs3. We thus have
$s(i) \in (10,12)$ and $s(A) \geq 22 - s(i) > 10$. Continue with \FF
on bins $B$, $A$, and $C$ in this order.

We claim that \gs1 is reached at the latest after \FF has
packed two items, $x$ and $y$, on bins other than $B$. If one of them (say $x$) is
packed into bin $A$, this holds because $s(x) + s(B)>22$
and $s(A)>10$ already before this assignment -- enough for \gs1.
If both items do not fit in $A$, they are both larger than 10, since $s(A)<12$.
We will show by contradiction that this cannot happen.

As $s(A) < 12$ from our previous analysis, we note that $s(x), s(y) > 10$.
We therefore have three items $i,x,y$ with $s(i),s(x),s(y)>10$ and
an item $s(a) > 6$ from our initial conditions. These four items
cannot be packed together by any offline algorithm into three bins of
capacity 16, and so we have a contradiction with $s(x), s(y) > 10$.
\qed
\end{proof}

\begin{goodsit}\label{lem:gs6} 
If $s(C)<4$, $s(B)>6$ and $s(A)\ge s(B)+ 4-s(C)$, 
there exists an algorithm that packs all remaining items into
three bins of capacity $22$.
\end{goodsit}

\begin{proof}
Pack all items into $A$, until an item $x$ does not
fit.  At this point $s(A)+x>22$.  If $x$ fits on $B$, we put it there
and reach \gs1 because $s(B)>6$. Otherwise, $x$ definitely fits on $C$
because $s(C)<4$.  By the condition on $s(A)$, we have 
$x + s(A)+ s(C) \ge x + s(B)+4>26$, and we are in \gs1 again.
\qed
\end{proof}

%

\begin{goodsit}\label{lem:gs7}
Suppose 
$s(A)<4$, $s(C)<4$. 
If $s(B)\le 9+\frac{1}{2}(s(A)+s(C))$ and for a new item $x$ we have
$s(B)+s(x)>22$,
then there exists an online algorithm that packs all remaining items into
three bins of capacity $22$.
\end{goodsit}

\begin{proof}
We have $s(B) > 22 - s(x) > 6$ and $s(x) > 22-s(B) \ge 13 - \frac{1}{2}(s(A)+s(C)) \ge s(B)+4-s(A)-s(C)$.
Placing $x$ on $A$ we increase the size of $A$ to at least $s(B)+4-s(C)$
and we reach \gs6. 
\qed
\end{proof}

\subsection{Good Situation First Fit}\label{sec:3alg}

Throughout our algorithm, we often use a special variant of \FF\/
which tries to reach good situations as early as possible. This
variant can be described as follows:

\begin{definition}\label{dfn:gsff}
Let $\calL=(X|_k, Y|_l, \ldots)$ be a list of bins $X,Y, \ldots $ where each
bin $X$ has an associated capacity $k$ satisfying $s(X)\leq k$.
\GSFF{$\calL$} (Good Situation First Fit) is an online algorithm for bin stretching that works as follows:

\vspace{1.5ex}
\algobox{ 
{\bf Subroutine \GSFF{$\calL$}:}
For each item $i$:\\
If it is possible to pack $i$ into any bin (including bins not in $\calL$, and
using capacities of 22 for all bins)
such that a good situation is reached,
do so and continue with the algorithm of the 
relevant good situation.\\
Otherwise, traverse the list $\calL$ in order and pack $i$ into the first bin $X$
such that $X|_k \in \calL$ and $s(X) + s(i) \leq k$.
If there is no such bin, stop.
}
\end{definition}

For example, \GSFF{$A|_4, B|_{22}$} checks whether
either $(A\cup \{j\},B,C)$, $(A,B\cup \{j\},C)$ or $(A,B,C\cup \{j\})$ is a partial packing of any good
situation. If this is not the case, the algorithm packs $j$ into bin
$A$ provided that $s(A)+s(j)\leq 4$. If $s(A) + s(j)>4$, the algorithm
packs $j$ into bin $B$ with capacity $22$. If $j$ cannot be placed
into $B$, \GSFF{$A|_4, B|_{22}$} halts and another
online algorithm must be applied to pack $j$ and subsequent items.

\subsection{The algorithm}\label{ssec:FP}

\vspace{1.5ex}

In a way, any algorithm for online bin stretching for three bins must
be designed so as to avoid several \textit{bad situations:} the two most
prominent ones being either two items of size $R/2$ or three items of
size $R/3$, where $R$ is the volume of the remaining items.

Our algorithm -- especially Steps \step{large} and \step{qstep} -- are
designed to primarily evade such bad situations, while making sure
that no good situation is missed. This evasive nature gives it its
name.

\def\itemskip{\hskip 3em}
\algobox{
\textbf{Algorithm} \tbalg:\\[-2.5ex]
\begin{compactenum}[(1)]
\item \label{fp:initial} Run \GSFF{$A|_4, B|_4$}.
\item \label{fp:rename} Rename the bins so that $s(A)\ge s(B)$.
\item \label{fp:jcheck} If the next item $j$ satisfies $s(j)>6$:
\item \label{fp:large}  \indentskip Set $p \defeq 6 + s(j)$; apply \GSFF{$A|_p, B|_{4}$}.
\item \label{fp:wcheck} \indentskip If the next item $w$ fits into $A|_{22}$:
\item \label{fp:wfits}  \indentskip \indentskip \GSFF{$A|_{22}, B|_{22}, C|_{22}$}.
\item \label{fp:otherwise} \indentskip Else: 
\item \label{fp:wwontfit}  \indentskip \indentskip \GSFF{$A|_p, B|_{22}, C|_{22}$}.

\item \label{fp:jelse} Else ($j$ satisfies $s(j) < 4$):
\item \label{fp:qstep} \indentskiptwodigit \GSFF{$A|_4,B|_q,C|_4$} where $q \defeq 9+\frac{1}{2}(s(A)+s(C))$.
\item[] \indentskip Update $q$ whenever $s(A)$ or $s(C)$ changes.
\item \label{fp:b1}\indentskiptwodigit\GSFF{$A|_4, B|_{22}, C|_{22}$}.
\item \label{fp:b2}\indentskiptwodigit\GSFF{$A|_{22}, B|_{22}, C|_{22}$}.
\end{compactenum}
}

\subsection{Analysis}

Let us start the analysis of the algorithm \tbalg in Step
\step{jcheck}, where the algorithm branches on the size of the item
$j$.

Throughout the proof, we will need to argue about loads of the 
bins $A$, $B$, $C$ before various items arrived. The following notation
will help us in this endeavour:

\medskip\noindent\textbf{Notation.} Suppose that $A$ is a bin and $x$ is an
item that gets packed at some point of the algorithm (not necessarily
into $A$).  Then $A_{\before x}$ will indicate the set of items that are
packed into $A$ just before $x$ arrived.

\smallskip
We first observe that our algorithm can be in two very different
states, based on whether $s(j) > 6$ or $s(j) < 4$.  Note that the case
$s(j) \in [4,6]$ is immediately settled using \gs2.

\begin{observation}
\label{obs:1a}
Assume that $s(j)<4$. We have $s(A_{\before j}) \in  (3,4)$
and $s(B_{\before j}) + s(j) \in(6,8)$
where $A$ and $B$ are bins after renaming in Step~{\rm \step{rename}}.
Thus both $A$ and $B$ received some items during Step~{\rm \step{initial}}.
Moreover, there is at most one item either in $A_{\before j}$, or in $B_{\before j}$.
\end{observation}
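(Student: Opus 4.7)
The plan is to trace \GSFF{$A|_4, B|_4$} of Step~\step{initial} using the pre-renaming bin names $P$ (first slot) and $Q$ (second slot), and to exploit two consequences of the fact that Step~\step{initial} ends via $j$ rather than via an achieved good situation: first, $j$ does not fit into $P|_4$ nor $Q|_4$, so $s(P_{\before j}) + s(j) > 4$ and $s(Q_{\before j}) + s(j) > 4$; second, no hypothetical placement of $j$ (nor of any earlier item) in any bin attains a good situation, since otherwise \GSFF{\ } would have packed accordingly. A routine case check shows that in this regime---each bin has load below $4$, bin $C$ is empty, and $s(j) < 4$---the only good situation potentially reachable by a single placement is \gs2, so avoiding \gs2 throughout is the only constraint \GSFF{\ } imposes.

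From the fact that \gs2 is never reached we immediately get $s(P_{\before j}), s(Q_{\before j}) < 4$ strictly (a load of exactly $4$ would land in $[4,6]$), giving $s(A_{\before j}), s(B_{\before j}) < 4$ after renaming. Combined with the two ``does not fit'' inequalities and the ``placing $j$ in $P$ or $Q$ does not trigger \gs2'' rule, we obtain $s(P_{\before j}) + s(j) > 6$ and $s(Q_{\before j}) + s(j) > 6$. Renaming yields $s(B_{\before j}) + s(j) > 6$, and together with the trivial upper bound $s(B_{\before j}) + s(j) < 8$ this gives $s(B_{\before j}) + s(j) \in (6, 8)$. The same inequalities imply $s(P_{\before j}), s(Q_{\before j}) > 6 - s(j) > 2$, so both bins received items during Step~\step{initial}.

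The main obstacle is the strict lower bound $s(A_{\before j}) > 3$, since the reasoning above alone only delivers $s(A_{\before j}) > 2$. The key extra ingredient is a structural claim: $Q$ contains exactly one item at the end of Step~\step{initial}. Indeed, the first item $q_1$ placed in $Q$ satisfies $s(q_1) > 2$ by the same ``no \gs2 on $P$'' argument applied at the moment $q_1$ arrived. A hypothetical second $Q$-item $q_2$ would then need $s(q_2) < 4 - s(q_1) < 2$ in order to fit into $Q$ without entering $[4,6]$, yet the same ``no \gs2 on $P$'' constraint applied when $q_2$ arrives forces $s(P) + s(q_2) > 6$, hence $s(P) > 4$, contradicting the upper bound. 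With $Q = \{q_1\}$ established, I split on the size of $q_1$: if $s(q_1) > 3$ then $s(Q_{\before j}) > 3$; otherwise $s(q_1) \le 3$ and $s(P_{\before q_1}) > 6 - s(q_1) \ge 3$, so since the $P$-load only grows over time, $s(P_{\before j}) > 3$. Taking the maximum yields $s(A_{\before j}) > 3$. The singleton $Q$ also delivers the final ``at most one item'' clause: after renaming, $Q$ becomes either $A_{\before j}$ or $B_{\before j}$, and that bin contains a single item.
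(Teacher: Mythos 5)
Your proposal is correct and follows essentially the same route as the paper's proof: both arguments are driven by the fact that \GSFF{$A|_4,B|_4$} never lets any bin's load enter $[4,6]$ (else \gs2 is reached), so every item diverted away from the first bin has size greater than $2$ and pushes the relevant combined load above $6$. Your detour through the ``second bin is a singleton'' claim to obtain $s(A_{\before j})>3$ is a slightly longer path than the paper's direct use of $s(A_{\before j})+s(B_{\before j})>6$ together with $s(A_{\before j})\ge s(B_{\before j})$, but it is valid, and the singleton claim is needed for the final clause in any case.
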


\begin{proof}
Since $s(j)<4$ and $s(B_{\before j})<4$, 
the item $j$ is assigned to $B$ in Step~\step{qstep}, which by Step~\step{rename} is
the least loaded bin among $A$ and $B$ after Step~\step{initial}.
For this bin, we have $s(B_{\before j}) > 2$, else 
we reach \gs2. 
This implies that both $A$ and $B$ received items in Step~\step{initial}, so $s(A_{\before j})+s(B_{\before j})>6$,
else a good situation would have been reached before $j$ arrived.
It follows that $s(A_{\before j})\in(3,4)$ and $s(B_{\before j}) + s(j) \in (6,8)$.

Since any item that is put into $B$ during Step~\step{initial} must have size
of more than two (otherwise it fits into $A|_6$), only one such item can be packed into $B$
which proves the last statement.\qed
\end{proof}

Contrast the previous observation with the next one, which considers $s(j) > 6$:

\begin{observation}
\label{obs:1b}
Assume that $s(j) > 6$. Then, $s(A_{\before j}) < 3$, $s(B_{\before j}) = 0$.
\end{observation}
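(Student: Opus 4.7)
The plan is to prove both conclusions by contradiction, using that GSFF stopped on $j$ in Step~\step{initial}, meaning that no tentative placement of $j$ into any of the three bins (at capacity~$22$) reaches any good situation.

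I would handle $s(B_{\before j})=0$ first. Suppose some item had been placed during Step~\step{initial} into the bin that ends up renamed to $B$, and look at the moment this second bin was first touched, by some item~$i$. Then $s(i)<4$ (it was placed by FF into a bin of capacity~$4$), it failed to fit into the other bin at capacity~$4$, and the hypothetical alternative ``place $i$ into that other bin at capacity~$22$'' must not have reached \gs2; taken together, these conditions force the two non-$C$ bins to have combined load strictly greater than~$6$ immediately after $i$ is placed. Since bin loads only grow during Step~\step{initial}, we still have $s(A_{\before j})+s(B_{\before j})>6$ when $j$ arrives. On the other hand, tentatively placing $j$ into $A$ at capacity~$22$ produces a bin holding an item of size $>6$ with $C$ still empty, so $s(B_{\before j})\in[3,4)$ would reach \gs5; since GSFF stopped, we get $s(B_{\before j})<3$, and the symmetric argument obtained by swapping $A$ and $B$ gives $s(A_{\before j})<3$. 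But then $s(A_{\before j})+s(B_{\before j})<6$, contradicting the lower bound derived above.

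For the second conclusion, with $B$ known to be empty, I reuse the same \gs5 pressure: tentatively placing $j$ into $B$ yields a bin holding an item of size $>6$ while $C$ is still empty, so $s(A_{\before j})\in[3,4)$ would trigger \gs5; hence $s(A_{\before j})<3$ (noting $s(A_{\before j})<4$ is automatic throughout Step~\step{initial}).

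The main subtlety is recognising \gs5 as the obstruction that does the work: it is tuned to precisely the shape ``bin with a single item of size $>6$, another bin in $[3,4)$, third bin empty'' which is exactly what the packing would look like after placing $j$ in Step~\step{jcheck}. A supporting ingredient is the monotonicity of bin loads throughout GSFF, which is what carries the ``$>6$'' bound on $s(A)+s(B)$ forward from the moment the second bin was opened to the arrival of $j$.
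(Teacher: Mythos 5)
Your proof is correct and takes essentially the same route as the paper's: both arguments rest on \gs5 as the obstruction that forces each of the two non-$C$ bins below $3$ once $j$ (of size $>6$) is tentatively placed in the other one, and on the fact that opening a second bin in Step~\step{initial} forces $s(A_{\before j})+s(B_{\before j})>6$ because the first item sent to that bin neither fit into $A|_4$ nor could trigger \gs2. The only difference is organizational: the paper first derives $s(A_{\before j})<3$ via \gs5 and then uses the renaming $s(B_{\before j})\le s(A_{\before j})$ to conclude the sum is below $6$ and hence $B$ is empty, whereas you play the two bounds against each other in a single contradiction and then re-derive $s(A_{\before j})<3$ unconditionally.
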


\begin{proof} If $s(A_{\before j}) \ge 3$ we reach \gs5 by packing
$j$ into $B$.  However, if $s(A_{\before j}) < 3$ then $s(A_{\before j}) + s(B_{\before j}) < 6$
which can be true (without reaching a good situation) only if $s(B_{\before j}) = 0$. \qed
\end{proof}

Both the analysis and the algorithm differ quite a lot based on the
size of $j$. If it holds that $s(j) > 6$, we enter the \emph{large
case} of the analysis, while $s(j) < 4$ will be analyzed as the
\emph{standard case}. Intuitively, if $s(j) > 6$, the offline optimum
is now constrained as well; for instance, no three items of size $10$
can arrive in the future. This makes the analysis of
the large case comparatively simpler.

\subsection{The large case}

We now assume that $s(j) > 6$. Our goal in both the large case and the
standard case will be to show that in the near future either a good
situation is reached or several large items arrive, but \tbalg is able
to pack them nonetheless.

Let us start by recalling the relevant steps of the algorithm:

\smallskip
\algobox{
\step{large} Set $p \defeq 6 + s(j)$; apply \GSFF{$A|_p, B|_{4}$}.\\
\step{wcheck} If the next item $w$ fits into $A|_{22}$:\\
\step{wfits}  \indentskip \GSFF{$A|_{22}, B|_{22}, C|_{22}$}.\\
\step{otherwise} Else: \\
\step{wwontfit}  \indentskip \GSFF{$A|_p, B|_{22}, C|_{22}$}
}

By choosing the limit $p$ to be $s(j)+6$ in Step \step{large}, we make
enough room for $j$ to be packed into $A$. We also ensure that
any item $i$ larger than $6$ that cannot be placed into $A$ with
capacity 22 must satisfy $s(i) + s(j) > 16$ and so $i$ cannot be with
$j$ in the same bin in the offline optimum packing.

Let us define $A_S$ as the set of items on $A$ of size less than $6$
(packed before or after $j$).  We note the following:

\begin{observation}\label{obs:nothingonc}

\begin{enumerate}
\item During Step {\rm \step{large}}, if $B$ contains any item, it is true that
$s(A_S) + s(B) > 6$.
\item If no good situation is reached, the item $w$ ending Step {\rm \step{large}} satisfies $s(w) > 6$.
\end{enumerate}

\end{observation}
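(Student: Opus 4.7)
For Claim 1, I would track the first moment some item $i$ is placed into $B$ by the list traversal of \tbalg's Step~\step{large}, which runs \GSFF{$A|_p, B|_4$}. Because GSFF tries $A|_p$ first, $i$ must satisfy $s(A) + s(i) > p = 6 + s(j)$ at its arrival. I would next argue that every item on $A$ other than $j$ has size strictly less than $6$: items packed before $j$ come from \GSFF{$A|_4, B|_4$} and so have size below $4$, and any later item of size in $[4,6]$ could have been placed on the still-empty bin $C$ to trigger \gs2, so GSFF would have taken that option instead of putting the item on $A$. Hence $s(A) = s(j) + s(A_S)$ at that moment, yielding $s(A_S) + s(i) > 6$. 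Since both $A_S$ and $B$ can only grow during Step~\step{large}, the inequality $s(A_S) + s(B) > 6$ is preserved from then on.

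For Claim 2, I would argue by contradiction, assuming $s(w) \le 6$. If $s(w) \in [4,6]$, placing $w$ on the still-empty $C$ gives $s(C) \in [4,6]$ and triggers \gs2, so GSFF would not halt at $w$. For the delicate sub-case $s(w) < 4$, I first collect constraints. Since $w$ does not fit in $B|_4$, $s(B) + s(w) > 4$, so $B$ is non-empty and Claim~1 yields $s(A_S) + s(B) > 6$. With $j$ of size greater than $6$ on $A$ and $C$ empty, \gs5 forbids $s(B) \in [3,4)$, so $s(B) < 3$. Placing $w$ on $B$ would yield $s(B) + s(w) \in (4,7)$; to avoid triggering \gs2 on $B$, this sum must exceed $6$, and combined with $s(B) < 3$ this forces $s(w) > 3$.

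The main step is then to test placing $w$ on $A$. Combining Claim~1 with $s(j) > 6$ and $s(w) > 3$ gives
\[
s(A) + s(B) + s(w) \;=\; s(j) + s(A_S) + s(B) + s(w) \;>\; s(j) + 6 + s(w) \;>\; 15,
\]
which meets the \gs4 threshold $15 + \tfrac{1}{2}s(C) = 15$ (recall $s(B) < 4$ and $s(C) = 0$). Hence placing $w$ on $A$ would trigger \gs4, contradicting the assumption that $w$ ends Step~\step{large} without any good situation being reached; therefore $s(w) > 6$. The main obstacle I anticipate is exactly this chain of exclusions -- ruling out \gs2 on $B$ or $C$ and \gs5 on $B$ -- needed to pin $s(w)$ and $s(B)$ into ranges tight enough for Claim~1 to push the three-bin sum above the \gs4 threshold.
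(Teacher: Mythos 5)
Your proposal is correct and follows essentially the same route as the paper: point~1 from the choice of $p=6+s(j)$ in \GSFF{$A|_p,B|_4$}, and point~2 by excluding $s(w)\in[4,6]$ via \gs2, excluding $s(B_{\before w})\ge 3$ via \gs5, forcing $s(w)>3$ via \gs2 on $B$, and then showing that packing $w$ into $A|_{22}$ would trigger \gs4 using point~1. The only detail worth adding is the remark (made in the paper) that $w$ indeed fits into $A|_{22}$ because $s(A_{\before w})<15$, as otherwise \gs4 would already hold.
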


\begin{proof}
The first point follows immediately from our choice of $p$ and \GSFF{$A|_p, B|_4$}.

For the second part of the observation, consider the item $w$ that
ends Step~\step{large} and assume $s(w) \le 6$. The possibility that
$s(w) \in  [4,6]$ is excluded due to \gs2. The case $s(B_{\before w}) \ge  3$ is
also excluded, as this would imply \gs5 with $j$ in $A$.

Since $s(B_{\before w}) + s(w) > 6$, the only remaining possibility is $s(B_{\before w}) \in  [2,3), s(w) \in 
(3,4)$. Even though $w$ does not fit into $A|_p$, if we were to pack
$w$ into $A|_{22}$, we can use the first point of this observation and get
$s(A) + s(B) \ge  s(j) + \big(s(A_S) + s(B_{\before w})\big) + s(w) > 6 + 6 + 3 = 18$, enough for \gs4. The algorithm
\GSFF{$A|_{p}, B|_{4}$} in Step \step{large} will notice this
possibility and will pack $w$ into $A$, where it will always fit, as
$s(A_{\before w}) < 15$ by \gs4. \qed

\end{proof}

We now split the analysis based on which branch is entered in Step \step{wcheck}:

\smallskip
\noindent
{\bf Case 1:} Item $w$ fits into bin $A$; we enter Step \step{wfits}.

We first note that $s(A)+s(B)<15$, else we are in \gs4 since $C$ is still empty. This inequality also
implies that $s(B) = 0$, otherwise we have $s(A) + s(B) = s(w) + s(j) + (s(A_S) +s(B)) > 18$
via Observation~\ref{obs:nothingonc}.

We continue with Step~\step{wfits} until we reach a good situation or
the end of input.  Suppose three items $x,y,z$ arrive such that none of them 
can be packed into $A$ and we do not reach a good situation. We will prove that
this cannot happen. We make several quick observations about those items:

\begin{enumerate}
\item We have $s(x) > 7$ because $s(A_{\before x}) < 15$ or we reach \gs4. The item $x$ is packed into $B$.
\item At any point, $B$ contains at most one item, otherwise $s(A) + s(B) > 22 + 7 > 26$, reaching \gs1.
\item We have $s(y) > 9$ because $\min(s(A_{\before y}), s(B_{\before y})) < 13$ by \gs1. The item $y$ is packed into $C$.
\item The bin $C$ contains also at most one item, similarly to $B$.
\item Again, we have $s(z) > 9$ similarly to $y$. The item $z$ does not fit into any bin.
\end{enumerate}

From our observations above, we get $s(x) + s(y) > 22$, $s(x) + s(z) >
22$, $s(y) + s(z) > 22$. Therefore, at least two of the items
$\{x,y,z\}$ are of size at least $11$. However, both items $j$ and $w$ have size at least $6$,
and there is no way to pack $j,w$ and the two items larger than $11$ into
three bins of capacity $16$, a contradiction.

\smallskip
\noindent \textbf{Case 2:} Item $w$ does not fit into bin $A|_{22}$. The choice
of $p$ gives us $s(j)+s(w)>16$. Item $w$ is placed on $B$.

The limit $p$ gives us an upper bound on the volume of small items
$A_S$ in $A$, namely $s(A_S) \le 6$.  An easy argument gives us a similar bound on $B$,
namely if $B_S \defeq B \setminus \{w\}$, then $s(B_S) < 4$. Indeed, we have
$26 > s(A) + s(B) > 22 + s(B_S)$, the first inequality implied by not
reaching \gs1.

In Case 2, it is sufficient to consider two items $x,y$ that do not
fit into $A|_{p}$ or $B|_{22}$. We have:

\begin{enumerate}
\item Using $s(B_S)<4$, we have $s(x) + s(w) > 18$ and $s(y) + s(w) > 18$.
\item None of the items $x,y$ fits into $A|_{22}$. If say $x$ did fit, then we
use the fact that $x$ does not fit into $B|_{22}$ and get
$s(B) + s(A) = \big(s(B_{\before x}) + s(x)\big) + s(A_{\before x}) > 22 + s(j) > 26$ and we reach \gs1.
\item The choice of the limit $p$ on $s(A)$ implies $s(x) + s(j) > 16$ and $s(y) + s(j) > 16$.
\item Since $\min(s(A),s(B)) < 13$ at all times by \gs1, we have $s(x)>9$ and $s(y) > 9$.
\item The items $x$ and $y$ do not fit together into $C$, or we would have
$s(C) + s(A) > 22 + s(y) > 26$.  This implies $s(x) + s(y) > 22$.

\end{enumerate}

From the previous list of inequalities and using $s(j) + s(w) > 16$,
we learn that no two items from the set $\{j,w,x,y\}$ can be together
in a bin of size $16$. Again, this is a contradiction with the
assumptions of \binstretch.

\subsection{The standard case}

From now on, we can assume that $s(j) < 4$, $j$ is packed into $B$ and
Step \step{qstep} of \tbalg is reached. Recall that by Observation~\ref{obs:1a}
$s(A_{\before j}) \in  (3,4)$, $s(B_{\before j}) + s(j) \in(6,8)$, and
there is exactly one item either in $A$, or in $B$; we denote this item by $e$.
We repeat the steps done by \tbalg in the standard case:

\smallskip
\algobox{
\step{qstep} \GSFF{$A|_4,B|_q,C|_4$} where $q \defeq 9+\frac{1}{2}(s(A)+s(C))$.\\
\-\hspace{2em} Update $q$ whenever $s(A)$ or $s(C)$ changes.\\
\step{b1} \GSFF{$A|_4, B|_{22}, C|_{22}$}.\\
\step{b2} \GSFF{$A|_{22}, B|_{22}, C|_{22}$}.
}

Recall that $s(A) > 3$ by Observation~\ref{obs:1a}. Assuming
that no good situation is reached before Step \step{qstep}, we
observe the following:


\begin{observation}
\label{obs:6-b}
In Step {\rm \step{qstep}}, as long as $C$ is empty,
packing any item of size at least $4$ leads to a good
situation. Thus while $C$ is empty, all items that 
arrive in Step {\rm \step{qstep}} and are not put on $A$ 
have size in $(6-s(A),4)$.
\end{observation}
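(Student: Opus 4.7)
My plan is to prove the two assertions of the observation in sequence, relying on invariants that hold throughout Step~\step{qstep} while $C$ is empty: by Observation~\ref{obs:1a} (and since \gs2 has not triggered), $s(A) \in (3,4)$, and because items are only added in this phase, $s(B) > 6$ is preserved, giving $s(A) + s(B) > 9$. The cap $q = 9 + s(A)/2 < 11$ also forces $s(B) < 11$, hence $s(A) + s(B) < 15$ (otherwise the renamed \gs4 would already hold).

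For the first part, I split on $s(i)$. If $s(i) \in [4,6]$, placing $i$ on the empty bin $C$ gives $s(C) \in [4,6]$, which is the hypothesis of \gs2. If $s(i) > 6$, placing $i$ on $B$ leaves two bins with load below $4$ (namely $A$ and $C$) and a third bin with load $s(B) + s(i) > 12$; the sum of the loads of this third bin and $A$ is $s(A) + s(B) + s(i) > 9 + 6 = 15$, so after relabeling (the new $B$ plays the role of \gs4's large bin, while $A$ and $C$ play the two small ones) the hypotheses of \gs4 are met. In both cases \GSFF's initial good-situation search succeeds, proving the first assertion.

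For the second part, the first part implies any item keeping Step~\step{qstep} active has $s(i) < 4$. I argue by contradiction: suppose such an $i$ is not placed on $A$ yet $s(i) \le 6 - s(A)$. If $s(A) + s(i) \in [4,6]$, packing $i$ on $A$ reaches \gs2, so \GSFF places $i$ on $A$. If $s(A) + s(i) < 4$, then $i$ fits $A|_4$ under the first-fit rule, and I check that no competing placement reaches a GS: packing on $B$ would require $s(A) + s(B) + s(i) \ge 15$ for \gs4, which combined with $s(i) < 4 - s(A)$ forces $s(B) > 11$, contradicting $s(B) < 11$; and packing on $C$ leaves $s(C) < 4$, below every GS threshold. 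Either way \GSFF places $i$ on $A$, a contradiction, so $s(i) > 6 - s(A)$ and hence $s(i) \in (6 - s(A), 4)$.

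The main obstacle I foresee is the intermediate range $s(A) + s(i) \in [4,6]$ of the second part, where packing $i$ on $B$ could in principle also trigger \gs4 (when $s(B) \ge 9$), giving \GSFF two competing GS-reaching placements. I would resolve this by invoking \GSFF's convention of traversing $\calL = (A|_4, B|_q, C|_4)$ in order: since $A$ is listed first, the good-situation search finds the \gs2 placement on $A$ before considering $B$, and thus $i$ indeed ends up on $A$ as required.
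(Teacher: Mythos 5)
Your handling of items with $s(i)\in[4,6]$ (put them on the empty $C$ to reach \gs2) and your proof of the second assertion agree in substance with the paper's argument; and for items with $s(i)>6$ that fit into $B|_{22}$, your route through \gs4 (using $s(B)+s(i)+s(A)>6+6+3=15$ with $A$ and the empty $C$ as the two small bins) is a valid alternative to the paper's use of \gs5. Your worry about ``competing'' good-situation placements in the second part is also moot: if any placement reaches a good situation, Step \step{qstep} ends favorably and the size bound is only needed for items packed without reaching one.

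There is, however, a genuine gap in your case $s(i)>6$: you place $i$ on $B$ without verifying that it fits there under the capacity $22$, and this is not automatic. During Step \step{qstep} the load of $B$ may be as large as $q=9+\frac{1}{2}s(A)<11$, while an item may have size up to $16$, so $s(B)+s(i)>22$ is entirely possible (e.g.\ $s(B)=8$ and $s(i)=15$ is consistent with the offline guarantee, since everything packed so far totals less than $12$). In that situation the \gs4 placement on $B$ is unavailable, and placing $i$ on $A$ or on the empty $C$ does not directly yield any of \gs1--\gs6 in general (for instance $s(i)$ may lie strictly between $11$ and $15$, so \gs3 need not apply). This is exactly the case \gs7 was built for: with $s(A)<4$, $s(C)=0$, $s(B)\le 9+\frac{1}{2}(s(A)+s(C))$ (the cap $q$ enforced throughout Step \step{qstep}) and $s(B)+s(i)>22$, \gs7 places $i$ on $A$ and reaches \gs6. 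The paper's proof covers precisely this branch (``any item with size more than $6$ is assigned to $B$ if it fits there, reaching \gs5, and else to $A$ or $C$, reaching \gs7''); your proof must add it, since without it the first assertion of the observation is unproved for large items that overflow $B$.
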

\begin{proof}
Any item with size in $[4-s(A),6-s(A)]\cup[4,6]$ leads to \gs2. 
Any item with size more than 6 is assigned to $B$ if it fits there,
reaching \gs5, and else to $A$ or $C$, reaching \gs7 since $s(B)\le9+\frac{1}{2}(s(A)+s(C))$. The only remaining
possible sizes of items that are not packed into $A$ are $(6-s(A),4)$.\qed
\end{proof}

\begin{corollary}
\label{obs:c}
After Step {\rm \step{qstep}}, $C$ contains exactly one item $r$ and $s(A)+s(C)>6$.
\end{corollary}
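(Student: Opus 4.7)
The plan is to establish three facts and combine them: (a) $C$ is nonempty after Step~\step{qstep}; (b) at that moment $s(A)+s(C)>6$; and (c) $C$ holds at most one item. Throughout I may assume that no good situation has been reached during Step~\step{qstep}, since otherwise \tbalg would already have handed off execution to that good situation's routine and we would not be in the ``after Step~\step{qstep}'' regime at all.

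For (a), suppose $C$ is still empty when the item $x$ arrives that causes \GSFF to halt. Failure of $x$ to fit into $C|_4$ forces $s(x)>4$, but Observation~\ref{obs:6-b} guarantees that while $C$ is empty any item of size at least $4$ admits a placement reaching a good situation; \GSFF would have made that placement rather than halted, a contradiction. For (b), let $r$ be the first item \GSFF drops into $C$. Since $r$ did not fit $A|_4$ we have $s(A_{\before r})+s(r)>4$; and since \GSFF did not route $r$ onto $A|_{22}$ via a good situation, placing $r$ on $A$ must not leave $s(A)$ in $[4,6]$, ruling out \gs2. Hence $s(A_{\before r})+s(r)>6$, and since loads are non-decreasing in Step~\step{qstep} we conclude $s(A)+s(C)\ge s(A_{\before r})+s(r)>6$ at the end of the step.

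For (c), suppose for contradiction that a second item $r'$ is later placed onto $C$. The same \gs2-avoidance argument applied to $r'$ yields $s(r')>6-s(A_{\before r'})$, and by the same token $s(A)<4$ must hold throughout Step~\step{qstep} (else \gs2 fires on $A$ alone), so in particular $s(A_{\before r}),s(A_{\before r'})<4$. Summing the two lower bounds gives
\[s(r)+s(r')>\bigl(6-s(A_{\before r})\bigr)+\bigl(6-s(A_{\before r'})\bigr)>4.\]
But $r'$ being placed on top of $r$ into $C|_4$ requires $s(r)+s(r')\le 4$, which is the needed contradiction.

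The main subtlety is the \gs2-avoidance bookkeeping used in (b) and (c): one must remember that \GSFF inspects every bin at capacity $22$ for a good-situation placement \emph{before} falling back to its list, so routing either $r$ or $r'$ into $C$ is permitted only when no placement of that item on $A$ would land $s(A)$ in the forbidden interval $[4,6]$. Once this observation is in hand, the remaining work is routine arithmetic.
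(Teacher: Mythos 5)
Your proof is correct and follows essentially the same route as the paper's: both rest on the \gs2-avoidance bound $s(i)>6-s(A_{\before i})$ for items diverted away from $A$ (Observation~\ref{obs:6-b}), which forces $C$ to receive an item, gives $s(A)+s(C)>6$, and makes two such items exceed the capacity $4$ of $C$. Your write-up merely spells out the pieces (non-emptiness of $C$ and the $s(A)+s(C)>6$ bound) that the paper leaves implicit in its citation of the observation.
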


\begin{proof}
From the previous observation it is clear that $C$ receives at least one item $r$ in Step \step{qstep}. No second
item $r_2$ can be packed into $C|_4$ in Step~\step{qstep} as $s(r) + s(r_2)$ would be at least $2(6 - s(A)) > 4$. \qed
\end{proof}

Step \step{qstep} terminates with a new item $x$ which fits into 
$B|_{22}$ (otherwise we would reach \gs7),
but not below the limit $q = 9 + \frac{1}{2}(s(A)+s(C))$. We pack $x$
into $B$ in Step~\step{b1}, getting $s(B) > 9 + \frac{1}{2}(s(A)+s(C)) > 12$.

A possible bad situation for our current packing is when three items $b_1,
b_2, b_3$ arrive, where the items are such that no two items
of this type fit together into any bin, and no single item of this
type fits on the largest bin, which is $B$ in our case. In fact, we
will prove later that this is the only possible bad situation.

We claim that this potential bad situation cannot
occur:

\begin{clm}\label{clm:nobigitems}
Suppose that algorithm \tbalg reaches no good situation in the standard case. Then,
$s(C)\ge s(r)>2.8$ and after placing $x$ into B in Step~{\rm \step{b1}} it holds that
$s(B)<12.8$.

Furthermore, suppose that among items that arrive after $x$, there are three items $b_1, b_2, b_3$ such that $\min( s(b_1), s(b_2), s(b_3)) > 8$. Then, it holds that 
\[\min(s(b_1),s(b_2),s(b_3)) < 9.2. \]

\end{clm}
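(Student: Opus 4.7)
I would prove the three assertions in sequence, using the hypothesis that no good situation has triggered in the standard case and carefully tracking each good-situation trigger including its renamings.

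For $s(r) > 2.8$, I examine the moment $r$ enters $C$ (the unique item placed there, by Corollary~\ref{obs:c}). Let $a = s(A)$ and $b = s(B)$ at that time. By GSFF in Step~\step{qstep}, $s(r) > 4 - a$ and $s(r) > 9 + a/2 - b$. Assuming for contradiction $s(r) \le 2.8$ and combining with $a > 3$ (Observation~\ref{obs:1a}) forces $b > 7.7$. Using that $B$'s contents consist of $j$ together with at most a few intermediate items of size in $(6-a, 4) \subset (2, 4)$ (Observation~\ref{obs:6-b}), and that $s(B_{\before j}) + s(j) \in (6, 8)$, I would enumerate the finite set of compatible configurations and, for each, show that some good situation---most plausibly a renaming of \gs4 swapping old $B$ and old $A$---must have triggered before $r$ could have been packed, contradicting the hypothesis.

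For $s(B) < 12.8$ after $x$ is placed in $B$ in Step~\step{b1}, the key constraints come from the two valid renamings of \gs4 with new $A = $ old $B \cup \{x\}$ and new $B, C$ drawn from old $A, C$ in either order; these must both fail, yielding $s(B) + s(A) < 15 + s(r)/2$ and $s(B) + s(r) < 15 + s(A)/2$. Combined with the first-part bound $s(r) > 2.8$, Observation~\ref{obs:1a} ($s(A) \in (3,4)$), and the lower bound $s(B) > q > 12$ from Step~\step{qstep} ending, the admissible range for $s(B)$ is pinned below $12.8$. I would also verify that \gs3 (forcing $s(B) < 15$) and any relevant renamings of \gs5, \gs6, \gs7 do not alter this conclusion.

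For $\min(s(b_i)) < 9.2$, assume for contradiction that $s(b_1) \ge 9.2$ (the smallest among the three). When $b_1$ arrives in Step~\step{b1}, GSFF checks each placement for a good-situation trigger. With $s(B) > 12$, $s(A) < 4$, and $s(r) > 2.8$, placing $b_1$ in $B$ makes the renamed new $A$ ($=$ old $B \cup \{b_1\}$) have size $\ge 12 + 9.2 = 21.2 \ge 15$, and either old $A$ or old $C$ (both with $s < 4$) can serve as new $C$, triggering \gs3(ii). This contradicts the hypothesis, forcing $s(b_1) < 9.2$. The main obstacle is obtaining the tight constants $2.8$, $12.8$, and $9.2$: the value $9.2$ aligns with $22 - 12.8$ (the residual capacity in $B$), while $12.8$ and $2.8$ emerge from a delicate balance of \gs4's renaming constraints at extreme parameter values, and making the case analysis airtight requires careful combinatorial enumeration of good-situation renamings throughout.
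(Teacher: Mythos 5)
The centerpiece of the claim, $\min(s(b_1),s(b_2),s(b_3)) < 9.2$, is not proved by your argument, and the argument you give is circular. You assume $s(b_1)\ge 9.2$ and conclude that placing $b_1$ into $B$ creates a bin of size at least $21.2\ge 15$, triggering \gs3(ii). But in the only scenario where the claim is needed, $b_1$ does \emph{not} fit into $B|_{22}$ -- the surrounding analysis derives $s(b_1) > 22-s(B) \ge 9.2$ from exactly this fact -- so the placement you invoke is unavailable; and whenever $b_1$ \emph{does} fit into $B$, the main text already closes that case via \gs4. The paper's proof is of an entirely different, \emph{offline} nature: it first establishes (Lemmas \ref{lem:eplusr}--\ref{lem:xplus}) that every pair from the four medium items $\{e,j,r,x\}$ of Definition \ref{dfn:fouritems} has total size more than $6.8$; since no two of $b_1,b_2,b_3$ (each of size more than $8$) can share an offline bin of capacity $16$, the pigeonhole principle puts one $b_i$ together with at least two of $e,j,r,x$ in some offline bin, forcing $s(b_i) < 16-6.8 = 9.2$. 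Nothing in your proposal plays the role of these pairwise lower bounds or of the offline feasibility constraint, and no online good-situation case analysis can substitute for them.

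The two numerical bounds also do not follow from the constraints you select. For $s(B)<12.8$: your two failed renamings of \gs4 are precisely \eq{5} and \eq{6}, and together with $s(A_{\before x})\in(3,4)$, $s(r)\in(2.8,4)$ the best they yield is $2s(B) < 30 - \tfrac{1}{2}\bigl(s(A_{\before x})+s(r)\bigr)$, i.e.\ $s(B)<13.55$ (no weighting of the two does better than about $13.5$). The constraint that actually brings the constant down to $12.8$ in Lemma \ref{lem:boundonB} is the failed \gs6 check for placing $x$ on $C$, inequality \eq{9b}, which upper-bounds $s(x)+s(r)$; it is not a side condition to ``verify does not alter the conclusion'' but the load-bearing one. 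For $s(r)>2.8$: your analysis at the moment $r$ is packed collapses exactly when a second item $j_2$ has entered $B$ during Step~\step{qstep}, since then $s(B_{\before r})$ can exceed $8$ and the inequality $s(B_{\before r})+s(r)>9+s(A_{\before r})/2$ no longer forces $s(r)$ to be large; moreover $s(B_{\before r})>7.7$ by itself triggers no good situation (\gs4 would need $s(A)+s(B)\ge 15$), so your hoped-for contradiction does not materialize. Excluding $j_2$ is the content of Lemma \ref{lem:onej}, whose proof is global: it combines \eq{3} and \eq{4} with inequalities generated only when $x$ arrives at the end of Step~\step{qstep}, namely \eq{6}, \eq{8b} and \eq{9}. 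In short, all three constants require information from $x$'s arrival and from the offline optimum that a moment-of-$r$ or moment-of-$b_1$ enumeration cannot see.
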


We now show how Claim \ref{clm:nobigitems} finishes the analysis of \tbalg.

After Step \step{qstep}, assuming no good situation was reached, the
algorithm places $x$ into $B|_{22}$ and continues with Step~\step{b1},
which is \GSFF{$A|_4, B|_{22}, C|_{22}$}. Claim \ref{clm:nobigitems}
gives us that $s(B) < 12.8$ after placing $x$, while the fact that we exited
Step \step{qstep} means that $s(B) > q = 9 + \left( s(A)+s(C) \right)/2 > 12$.

Consider the first item $b_1$ that does not fit into $A|_4$. We have
that $s(b_1)>2$, otherwise \gs2 is
reached. However, any item that fits into $B$ (as long as $s(C) \le 4$) triggers
\gs4, because $s(A) + s(B) + s(b_1) \ge  6 + 12 > 15 + s(C)/2$.

We now know that the first item $b_1$ does not fit into both $A|_{4}$ and $B|_{22}$.
We place it into $C$, noting that $s(b_1) > 22 - s(B) \ge  22 - 12.8 = 9.2$.

We keep packing items into $A|_{4}$, waiting for the second item $b_2$
that does not fit into $A|_4$ in Step~\step{b1}. Again, $s(b_2) >
2$. Suppose that $b_2$ fits into $B|_{22}$ or $C|_{22}$. Claim \ref{clm:nobigitems}
gives us $s(r)>2.8$; we thus sum up bins $B$ and $C$ and get $s(B_{\before b_2}) +
s(b_2) + s(r) + s(b_1) > 12 + 2 + 2.8 + 9.2 = 26$, which is enough for
\gs1.  Our assumption was false, the item $b_2$ does 
fit into neither $B|_{22}$ nor $C|_{22}$, in particular $s(b_2)>9.2$.

We move to Step~\step{b2}, pack $b_2$ into $A|_{22}$ and continue
with \GSFF{$A|_{22}, B|_{22}, C|_{22}$}. If at any time $s(A) \ge  14$, we
enter \gs1 on $A$ and $B$. Otherwise, if an item $b_3$ does not fit into $A|_{22}$,
it must satisfy $s(b_3) > 8$.

We now apply the full strength of Claim \ref{clm:nobigitems}. The
smallest item of $b_1$, $b_2$, $b_3$ must have size less than $9.2$,
and because of our argument, it must be $b_3$ -- but this means it
fits into $B$, as $s(B)<12.8$. \gs1 on bins $A$ and $B$ finishes the
packing, since $s(A) + s(b_3) + s(B_{\before b_3}) > 22 + 12 > 26$.

\subsection{Proof of Claim \ref{clm:nobigitems}}

Our current goal is to prove Claim \ref{clm:nobigitems}.
As in the large case, we would now like to appeal to the offline
layout of the larger items currently packed. Unlike the large case,
none of the items we have packed before Step~\step{b1} is guaranteed to be over $6$.

Sidestepping this obstacle, we will argue about the offline layout of
the smaller items. We now list several items that are packed before Step~\step{b2} and will be important in our analysis:

\begin{definition}\label{dfn:fouritems} The four items $e,j,r,x$ are defined as follows:
\begin{enumerate}
\item The item $e, 2 < s(e) < 4$: the only item packed into $B$ in Step~{\rm \step{initial}}
by Observation~\ref{obs:1a}.
(Note that $e$ might end up on $A$ after renaming the bins.)
\item The item $j, 2 < s(j) < 4$, defined in Step~{\rm \step{jcheck}}.
\item The item $r, 2 < s(r) < 4$, which is placed into $C$ in Step~{\rm \step{qstep}} by Observation~\ref{obs:c}; $r$ is the only item in $C$ until Step~{\rm \step{b1}}.
\item The item $x$ which terminated Step~{\rm \step{qstep}}.
\end{enumerate}

\end{definition}

There are four such items and only three bins, meaning that in the
offline optimum layout with capacity $16$, two of them are packed in
the same bin.  We will therefore argue about every possible pair,
proving that each pair is of size more than $6.8$.

Our main tool in proving the mentioned lower bounds are the
inequalities that must be true during various stages of algorithm
\tbalg, since a good situation was not reached. We now list
all the major inequalities that we will use:

\begin{itemize}

\item At the beginning of Step~\step{qstep}, packing $j$ makes the bin $B$ go over $6$:
\begin{equation}\label{eq:1}
s(B_{\before j}) + s(j) > 6.
\end{equation}

\item In Step~\step{qstep}, we know that any item $i$ that gets packed into $B$ (and does not fit into $A|_4$) also does not fit into $A|_{6}$ or else we reach \gs2:
\begin{equation}\label{eq:2}
s(A_{\before i}) + s(i) > 6.
\end{equation}

\item The item $r$ does not fit into $B|_q$:
\begin{equation}\label{eq:3}
s(B_{\before r})+ s(r) > 9 + \frac{s(A_{\before r})}{2}.
\end{equation}

\item The item $r$ cannot cause \gs4 if it is packed into $B$ (with $C$ still empty):

\begin{equation}\label{eq:4}
\left(s(B_{\before r}) + s(r) \right) + s(A_{\before r}) < 15.
\end{equation}

\item When $x$ got packed into $B|_{22}$ in Step~\step{b1}, it did not
cause \gs4 when summing $B$ with $C$:

\begin{equation}\label{eq:5}
\left(s(B_{\before x}) + s(x) \right) + s(r) < 15 + \frac{s(A_{\before x})}{2}. 
\end{equation}

\item The item $x$ also did not cause \gs4 when summing $B$ with $A$:

\begin{equation}\label{eq:6}
\left(s(B_{\before x})+ s(x) \right) + s(A_{\before x}) < 15 + \frac{s(r)}{2}. 
\end{equation}

\item Packing $x$ into $B|_{22}$ in Step~\step{b1} causes $B$ to go over the limit $q$:
\begin{equation}\label{eq:7}
s(B_{\before x}) + s(x) > 9 + \frac{s(A_{\before x}) + s(r)}{2}.
\end{equation}

\item The item $x$ cannot start \gs6 when packed into $A$. Comparing bin $A$ to $B$, we get:
\begin{equation}\label{eq:8}
s(A_{\before x}) + s(x) < s(B_{\before x}) + (4 - s(r)).
\end{equation}

We get a similar but slightly different inequality when comparing $B$ to $A$ instead:
\begin{equation}\label{eq:8b}
s(B_{\before x}) < s(A_{\before x}) + s(x) + \big(4 - s(r)\big).
\end{equation}

\item \gs6 could not be reached when the algorithm
considered packing $x$ into bin $C$, comparing the bin $C$
to $B$:
\begin{equation}\label{eq:9b}
s(r) + s(x) < s(B_{\before x}) + \big(4 - s(A_{\before x})\big).
\end{equation}

Again as in inequality \eq{8b} we can compare bin $B$ to $C$ and get:
\begin{equation}\label{eq:9}
s(B_{\before x}) < s(r) + s(x) + (4 - s(A_{\before x})). 
\end{equation}

\end{itemize}





Note that one can prove Claim~\ref{clm:nobigitems} by showing that the
converse of the claim and the above inequalities form an infeasible
linear programming instance. This was our approach as
well. Nonetheless, we provide an explicit proof for completeness.

Our first lemma establishes that $j$ is actually the only item that
is packed into $B$ during Step~\step{qstep}, which intuitively means that
$j$ is not too small:

\begin{lemma}\label{lem:onej}
Assume that no good situation is reached until Step~{\rm \step{b1}}. Then
it holds that during Step~{\rm \step{qstep}}, only $j$ is packed into $B$.
\end{lemma}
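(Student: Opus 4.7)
The plan is to argue by contradiction. Suppose that during Step~\step{qstep} some item $i \neq j$ is packed into~$B$. The rules of \GSFF{$A|_4, B|_q, C|_4$} give three basic constraints at the moment of packing: $i$ fits into $B|_q$, so $s(B_{\before i})+s(i) \le q = 9 + \tfrac{1}{2}(s(A_{\before i})+s(C_{\before i}))$; $i$ does not fit into $A|_4$ and placing $i$ in $A|_{22}$ must not trigger \gs2, so (exactly as in the derivation of inequality~\eq{2}) $s(A_{\before i})+s(i) > 6$; and, since $j$ is already in $B$ and by the lemma's hypothesis no earlier item went to~$B$ during qstep, inequality~\eq{1} gives $s(B_{\before i}) = s(B_{\before j})+s(j) > 6$. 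Combined with Observation~\ref{obs:1a}, these force $s(A_{\before i}) \in (3,4)$ and, right after $i$ is packed, $s(B) > 12 - s(A_{\before i}) > 8$.

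Next I would verify that the state immediately after $i$ is packed does not itself satisfy any good situation: the cap~$q$ keeps $s(A)+s(B)$ strictly below $15$ when $C$ is empty (killing \gs4), no bin reaches $15$ (killing \gs3), every individual item packed so far has size less than~$4$ (killing \gs5), and both $s(A),s(C)<4$ prevents any \gs6 renaming from providing a large enough ``$A$''-role bin. So the contradiction cannot come from the moment $i$ is packed and must instead arise from items that would have to arrive before Step~\step{b1} begins.

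The heart of the plan is then a short forward trace showing that no adversary continuation avoids a good situation, so Step~\step{b1} can never actually be reached. By Observation~\ref{obs:6-b} (and the same reasoning once $C$ becomes nonempty), every item arriving after $i$ in qstep has size either at most $4-s(A)$ (``tiny'', absorbed into $A|_4$) or in $(6-s(A),4)$ (a ``band'' item). Tiny items only raise $s(A)$, and once $s(A)$ would cross $4$ the very next adversary item makes $s(A)\in[4,6]$ via $A|_{22}$, triggering \gs2. So the adversary must eventually send a band item~$i''$ with $s(A_{\before i''})<4$. If $C$ is still empty at that point, the inherited bound $s(B) > 12 - s(A)$ together with $s(i'')>6-s(A)$ gives $s(B)+s(i'') > 18 - 2s(A) > q = 9+s(A)/2$ (using $s(A)<18/5$), forcing $i''$ into $C|_4$ and populating $C$ with $s(r)>2$. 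The next band item $i'''$ that the adversary must then send, when placed in $A|_{22}$, brings $s(A)'$ above $6$, and the inherited bound on $s(B)$ together with $s(A)' \le s(A_{\before i'''})+4 \le 8$ and $s(r)>2$ verifies the \gs6 condition $s(B) \ge s(A)' + 4 - s(C)$ in the renaming where our $B$ plays the role of the big bin~``$A$''. Hence \GSFF{$A|_4, B|_q, C|_4$} would place $i'''$ into $A$ and trigger \gs6, contradicting the hypothesis.

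The main obstacle is the sub-case in which $s(A_{\before i})$ is close to~$4$, because then a subsequent band item can still fit into the enlarged $B|_q$ and become a \emph{third} item in $B$ during qstep. I would dispose of this sub-case by iterating the same inequalities: each additional band item in $B$ further raises $s(B)$, shrinks the residual capacity $q - s(B)$, and must eventually force a later band item into $C$, after which the \gs6 trigger above fires. If the explicit case split becomes too intricate, I would encode the constraints derived above together with the converses of \gs1--\gs7 as an LP-infeasibility certificate, mirroring the approach the paper describes for Claim~\ref{clm:nobigitems}.
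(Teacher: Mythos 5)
Your approach is genuinely different from the paper's, but as sketched it has concrete numerical gaps in exactly the hard regime, so it does not yet constitute a proof. The paper does not run a forward trace during Step~\step{qstep}; instead it exploits the items $r$ and $x$ that must exist for Step~\step{b1} to be reached. It first rules out \emph{two} extra items in $B$ by showing $s(B)>10$ forces $s(x)>5$ (else placing $x$ on $C$ gives \gs6), whence packing $x$ into $B$ gives \gs3; it then rules out a single extra item $j_2$ by a case split on whether $r$ or $j_2$ arrives first, closing each case with a specific linear combination of the inequalities \eq{1}--\eq{9} (yielding $s(j_2)<2$ or $s(r)>4$, both contradictions).

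The gaps in your trace are these. First, the step ``$s(B)+s(i'')>18-2s(A)>q=9+s(A)/2$'' requires $s(A)<3.6$, but Observation~\ref{obs:1a} only gives $s(A_{\before j})\in(3,4)$, and $s(A)$ only grows during Step~\step{qstep}; for $s(A)\in[3.6,4)$ the band item $i''$ can still fit into $B|_q$, so $C$ is not forced to open and a third item can enter $B$. Second, even when $C$ does open, your \gs6 trigger fails numerically: you need $s(B)\ge s(A)'+4-s(r)$ with $s(A)'$ possibly near $8$ and $s(r)$ possibly near $2$, i.e.\ $s(B)\ge 10$, while your inherited bound is only $s(B)>12-s(A_{\before i})>8$. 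Third, once $C$ is nonempty the dichotomy of Observation~\ref{obs:6-b} (tiny vs.\ band items) no longer holds as stated, since its proof of the ``size $>6$'' case uses \gs5, which requires $C$ empty. You flag the first issue as ``the main obstacle'' but resolve it only by promising to iterate or to exhibit an LP-infeasibility certificate; that deferral is where the actual content of the lemma lives, and closing it appears to require the inequalities involving the terminating item $x$ (\eq{5}--\eq{9}), which your trace never uses.
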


\begin{proof}

We first prove that no two additional items $j_2, j_3$ can be packed into $B$
during Step~\step{qstep}. Assuming the contrary, we get $s(B_{\before j_2}) +
s(j_2) + s(j_3) > 6 + 2 + 2 = 10$. With that load on $B$, we consider
the packing at the end of Step~\step{qstep}, when the item $x$ arrived. If
$s(x) + s(C_{\before x}) < 9$, we get \gs6 by placing $x$ into $C$ since $s(A)>3$, so it must be true
that $s(x) + s(C_{\before x}) > 9$, which means $s(x) >
5$. This is enough for us to place $x$ into $B|_{22}$
(where it fits, otherwise we are in \gs7) and reach \gs3.

This contradiction gives us that at most one additional item $j_2$ can
be packed into $B$ during Step \step{qstep}. We will now prove that
even $j_2$ does not exist.






We split the analysis into two cases depending on which of $j_2$ and $r$ arrives first.

\smallskip
\noindent \textbf{Case 1.} The item $r$ is packed
before $j_2$, meaning $s(B_{\before x}) = s(B_{\before r}) + s(j_2)$.






We start with inequalities \eq{3}, \eq{6} and \eq{8b} in the following form:

\begin{align*}
9 + \frac{s(A_{\before r})}{2} &< s(B_{\before r}) + s(r) \\
s(B_{\before r}) + s(j_2) + s(x) + s(A_{\before x}) &< 15 + \frac{s(r)}{2} \\
s(B_{\before r}) + s(j_2) &< s(A_{\before x}) + s(x) + \big(4 - s(r)\big)
\end{align*}

We sum twice \eq{3} with \eq{6} and \eq{8b}:

\begin{align*}
&18 + s(A_{\before r}) + s(B_{\before r}) + s(j_2) + s(x) + s(A_{\before x}) + s(B_{\before r}) + s(j_2) \\
&< 2s(B_{\before r}) + 2s(r) + 15 + \frac{s(r)}{2} + s(A_{\before x}) + s(x) + 4 - s(r)
\end{align*}
\[ s(A_{\before r}) + 2s(j_2) < \frac{3s(r)}{2} + 1\]

Using $s(A_{\before r}) \ge s(A_{\before j})$ (since $r$ arrives after $j$)
along with $s(A_{\before j}) > 3$ from Observation \ref{obs:1a} and $s(r) < 4$ gives us:

\begin{align*}
3 + 2s(j_2) &< 7 \\
s(j_2) &< 2
\end{align*}

which is a contradiction, since $s(A_{\before j_2}) < 4$ and $j_2$ did not fit into $A|_{6}$.

\smallskip

\noindent \textbf{Case 2.} In the remaining case, $j_2$ arrives before $r$, which means $s(B_{\before x}) = s(B_{\before r}) = s(B_{\before j}) + s(j) + s(j_2)$.






We start by summing \eq{4} and \eq{6}. We get:
\begin{align}
s(B_{\before r}) + s(B_{\before x}) + s(x) + s(r) + s(A_{\before x}) + s(A_{\before r}) &< 30 + s(r)/2 \nonumber \\ 
2s(B_{\before j}) + 2s(j) + 2s(j_2) + s(x) + s(r) + s(A_{\before r}) + s(A_{\before x}) &< 30 + s(r)/2. \label{eq:result1}
\end{align}

Keeping \eq{result1} in mind for later use, we continue
by considering \eq{1}, \eq{2} and \eq{9} in the
following form:

\begin{align}
s(B_{\before j}) + s(j) &> 6 \label{eq:1applied}\\
s(A_{\before j_2}) + s(j_2) &> 6 \label{eq:2applied}\\
s(r) + s(x) + (4 - s(A_{\before x})) &> s(B_{\before x}) = s(B_{\before j}) + s(j) + s(j_2) \nonumber
\end{align}

Summing the three inequalities gives us:
\[ s(B_{\before j}) + s(j) + s(A_{\before j_2}) + s(j_2) + s(r) + s(x) + (4 - s(A_{\before x})) > 12 + s(B_{\before j}) + s(j) + s(j_2) ,\]
\[s(r) + s(x) + \left( s(A_{\before j_2}) - s(A_{\before x}) \right) > 8, \]
\begin{equation}\label{eq:result3}
s(r) + s(x) > 8.
\end{equation}

Summing two times \eq{1applied}, two times
\eq{2applied} and once \eq{result3} gives us:

\begin{equation}\label{eq:result2}
2s(B_{\before j}) + 2s(j) + 2s(j_2) + 2s(A_{\before j_2}) + s(r) + s(x) > 32.
\end{equation}

Using $s(A_{\before j_2}) \le s(A_{\before r}) \le s(A_{\before x})$ (which is only true here
in Case 2, where $r$ arrived later) and recalling
\eq{result1} along with \eq{result2}, we get $30 +
s(r)/2 > 32$ and $s(r) > 4$, which is a contradiction with $r$ fitting into
$C|_4$. \qed

\end{proof}

Having established that only one item $j$ is packed into $B$ during
Step \step{qstep}, we can start deriving lower bounds on pairs of
items from the set $\{e,j,r,x\}$. We will prove these bounds similarly
to Lemma \ref{lem:onej}, mostly by summing bounds that arise from
evading various good situations.

\begin{lemma}\label{lem:eplusr}
Suppose that $e$ and $r$ are items as described in Definition \ref{dfn:fouritems} and suppose
also that no good situation was reached during Step~{\rm \step{qstep}} of the algorithm \tbalg.
Then, $s(e) + s(r) \ge  s(B_{\before j}) + s(r) > 6.8$.
\end{lemma}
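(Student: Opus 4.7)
My plan is to establish $s(B_{\before j}) + s(r) > 6.8$ first, and then argue separately that $s(e) \geq s(B_{\before j})$ in order to obtain the stated chain.

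For the comparison $s(e) \geq s(B_{\before j})$, I would unpack the renaming in Step~\step{rename}. Since $e$ is by definition the unique item placed into the bin called $B$ during Step~\step{initial}, after the renaming either $e$ remains in $B$, whence $s(B_{\before j}) = s(e)$, or the labels are swapped and $e$ ends up in $A$; in the latter case $s(A_{\before j}) = s(e)$, and the renaming rule $s(A_{\before j}) \geq s(B_{\before j})$ yields $s(e) \geq s(B_{\before j})$.

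For the main inequality, I would invoke Lemma~\ref{lem:onej} to conclude that $j$ is the only item placed into $B$ during Step~\step{qstep}, so that $s(B_{\before r}) = s(B_{\before j}) + s(j)$. Corollary~\ref{obs:c} guarantees that $C$ is still empty when $r$ arrives, so the limit $q$ at that moment simplifies to $9 + \tfrac{1}{2} s(A_{\before r})$. Substituting these two facts into~\eq{3} gives
\[ s(B_{\before j}) + s(j) + s(r) > 9 + \tfrac{1}{2} s(A_{\before r}). \]
Next, I would apply~\eq{2} to $j$, which is packed into $B$ in Step~\step{qstep} and, by Observation~\ref{obs:1a}, does not fit into $A|_4$; this yields $s(A_{\before j}) + s(j) > 6$. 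Subtracting this from the previous inequality eliminates $s(j)$ and produces
\[ s(B_{\before j}) + s(r) > 3 + \tfrac{1}{2} s(A_{\before r}) + s(A_{\before j}). \]
Since $r$ is packed after $j$, the load $s(A)$ can only have grown, so $s(A_{\before r}) \geq s(A_{\before j})$; combined with $s(A_{\before j}) > 3$ from Observation~\ref{obs:1a}, this gives $s(B_{\before j}) + s(r) > 3 + 3/2 + 3 = 15/2$, comfortably stronger than the required $34/5 = 6.8$.

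I do not expect a real obstacle: once Lemma~\ref{lem:onej} and Corollary~\ref{obs:c} are in hand, the argument is just a short linear combination of two numbered inequalities together with the monotonicity of $s(A)$ in time. The only bookkeeping worth doing carefully is verifying that~\eq{2} applies to $j$ (which follows directly from $s(A_{\before j}) > 3$ together with $j$ not fitting into $A|_4$) and the small renaming case distinction for $s(e) \geq s(B_{\before j})$.
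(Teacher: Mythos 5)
Your reduction to $s(B_{\before j}) + s(r) > 6.8$ via the renaming case distinction is exactly the paper's first step and is fine, as is the use of Lemma~\ref{lem:onej} and Corollary~\ref{obs:c} to rewrite \eq{3} as $s(B_{\before j}) + s(j) + s(r) > 9 + \tfrac{1}{2}s(A_{\before r})$. The fatal step is the ``subtraction'': you take two inequalities pointing in the \emph{same} direction, $s(B_{\before j}) + s(j) + s(r) > 9 + \tfrac{1}{2}s(A_{\before r})$ and $s(A_{\before j}) + s(j) > 6$, and subtract the second from the first. That is not a valid operation ($a>b$ and $c>d$ do not imply $a-c>b-d$). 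To eliminate $s(j)$ from the left-hand side of \eq{3} while preserving a lower bound on $s(B_{\before j})+s(r)$ you need an \emph{upper} bound on $s(j)$, and \eq{2} supplies only a lower bound. The only available upper bound is $s(j)<4$, and substituting it gives merely $s(B_{\before j}) + s(r) > 5 + \tfrac{1}{2}s(A_{\before r}) > 6.5$, which falls short of the required $6.8$. Moreover, your claimed conclusion $s(B_{\before j})+s(r)>7.5$ is not just unproven but false: the constraint system admits assignments with $s(B_{\before j})+s(r)$ arbitrarily close to $6.8$ (e.g.\ $s(A)=s(B_{\before j})=3.6$, $s(j)=4-\eps$, $s(r)=3.2$, $s(x)=4.8$ satisfies \eq{1}--\eq{9} up to boundary perturbations), so no correct argument can deliver $7.5$.

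The missing idea is that \eq{3} alone is not enough; the paper closes the gap from $6.5$ to $6.8$ by bringing in the item $x$ that terminates Step~\step{qstep}. Specifically, it combines $8\times$\eq{3} with $2\times$\eq{7} (the fact that $x$ overflows the limit $q$, which couples $s(x)$ to $s(A_{\before x})+s(r)$) and $2\times$\eq{8} (the fact that packing $x$ on $A$ does not trigger \gs6, which bounds $s(x)$ from above in terms of $s(B_{\before x})$ and $s(r)$), then uses $s(B_{\before j})\le s(A_{\before r}), s(A_{\before x})$ and $s(j)<4$ to obtain $5\bigl(s(B_{\before j})+s(r)\bigr)>34$. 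Any repair of your argument will need some such additional constraint involving $x$; the two inequalities you chose cannot yield the stated bound.
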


\begin{proof}
First of all, it is important to note that the item $e$ may be
packed on $A$ or on $B$.
Since either $B_{\before j}$, or $A_{\before j}$ contains solely $e$
by Observation~\ref{obs:1a}, we get that either $s(B_{\before j}) = s(e)$,
or $s(B_{\before j}) \le s(A_{\before j}) = s(e)$. Thus it is sufficient to prove
$s(B_{\before j}) + s(r) > 6.8$.

We start the proof of $s(B_{\before j}) + s(r) > 6.8$ by restating \eq{3}, \eq{7},
and \eq{8} in the following form:

\begin{align*}
s(B_{\before j}) + s(j) + s(r) &> 9 + \frac{s(A_{\before r})}{2} \\
s(B_{\before j}) + s(j) + s(x) &> 9 + \frac{s(A_{\before x}) + s(r)}{2} \\
s(B_{\before j}) + s(j) + (4 - s(r)) &> s(A_{\before x}) + s(x). 
\end{align*}

Before summing up the inequalities, we multiply the first one by 8, the
second by 2 and the third by 2. 
%
%
In total, we have:
\[ 12s(B_{\before j}) + 12s(j) + 8 + 6s(r) + 2s(x) > 90 + 3s(A_{\before x}) + 4s(A_{\before r}) + s(r) + 2s(x). \]

We know that $s(B_{\before j}) \leq s(A_{\before x})$ and $s(B_{\before j}) \leq s(A_{\before r})$, 
allowing us to cancel out the terms:

\[ 5s(B_{\before j}) + 5s(r) + 12s(j) > 82. \]

Finally, using the bound $s(j) < 4$ and noting that $(82 - 48)/5 = 6.8$, we get

\[ s(B_{\before j}) + s(r) > 6.8. \qquad \qed\]
\end{proof}

\begin{lemma}\label{lem:eplusj}
Suppose that $e$ and $j$ are items as described in Definition \ref{dfn:fouritems} and suppose
also that no good situation was reached by the algorithm \tbalg.
Then, $s(e) + s(j) \ge  s(B_{\before j}) + s(j) > 7.6$.
\end{lemma}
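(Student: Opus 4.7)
My plan is to follow the template of the proof of Lemma~\ref{lem:eplusr}, adjusting both the weighted combination of inequalities and the way the right-hand side is simplified at the end. First, the inequality $s(e) + s(j) \ge s(B_{\before j}) + s(j)$ is established exactly as in the opening of Lemma~\ref{lem:eplusr}: by Observation~\ref{obs:1a}, either $e$ is the sole item of $B_{\before j}$ (in which case $s(e) = s(B_{\before j})$), or $e$ is the sole item of $A_{\before j}$ (in which case $s(A_{\before j}) = s(e) \ge s(B_{\before j})$ by Step~\step{rename}). Writing $b \defeq s(B_{\before j}) + s(j)$, it therefore suffices to show $b > 7.6$. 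By Lemma~\ref{lem:onej} we have $s(B_{\before r}) = s(B_{\before x}) = b$, so that \eq{3}, \eq{7}, and \eq{8} become linear constraints in $b$, $s(r)$, $s(x)$, $s(A_{\before r})$, and $s(A_{\before x})$.

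The central algebraic step is to form the weighted sum $3\cdot\eq{3} + 2\cdot\eq{7} + 2\cdot\eq{8}$. The weights are tuned precisely so that both $s(r)$ and $s(x)$ cancel: the coefficient of $s(r)$ is $3$ on each side (from $3\cdot\eq{3}$ on the left, and $1+2$ from $2\cdot\eq{7}$ and $2\cdot\eq{8}$ on the right), and the coefficient of $s(x)$ is $2$ on each side. After this cancellation one is left with
\[
  7\,b \;>\; 37 \,+\, \tfrac{3}{2}\, s(A_{\before r}) \,+\, 3\, s(A_{\before x}).
\]

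To finish, I would eliminate the two $A$-loads by coupling them back to $b$ rather than bounding them pointwise. By Step~\step{rename}, $s(A_{\before j}) \ge s(B_{\before j}) = b - s(j)$, and since the load of $A$ is monotonically non-decreasing as subsequent items arrive, we have $s(A_{\before r}), s(A_{\before x}) \ge b - s(j)$. Substituting gives $7\,b > 37 + \tfrac{9}{2}\,(b - s(j))$, which rearranges to $b > (74 - 9\,s(j))/5 = 14.8 - 1.8\,s(j)$. Since we are in the standard case $s(j) < 4$ (Step~\step{jelse}), this yields $b > 14.8 - 7.2 = 7.6$, as required.

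The main obstacle is not the algebra but realising that one must eliminate $s(A_{\before r})$ and $s(A_{\before x})$ by coupling them back to $b$ via the renaming invariant; substituting only the pointwise lower bound $s(A_{\before r}), s(A_{\before x}) > 3$ from Observation~\ref{obs:1a} would give the weaker conclusion $b > 43/6 \approx 7.17$, short of the target. The weights $(3,2,2)$ themselves can be found either directly by imposing simultaneous cancellation of $s(r)$ and $s(x)$ (which forces $\alpha:\beta:\gamma = 3:2:2$), or systematically by reading off multipliers from the dual of the linear programme defined by \eq{1}--\eq{9b} together with the renaming constraints.
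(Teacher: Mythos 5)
Your proof is correct, but it follows a genuinely different route from the paper's. The paper proves this lemma by summing only \eq{7} and \eq{9b} (after substituting $s(B_{\before x}) = s(B_{\before j})+s(j)$, which rests on Lemma~\ref{lem:onej}) to get $2\bigl(s(B_{\before j})+s(j)\bigr) > 5 + \frac{3}{2}\bigl(s(A_{\before x})+s(r)\bigr)$, and then closes by plugging in Lemma~\ref{lem:eplusr} in the form $s(A_{\before x})+s(r)\ge s(B_{\before j})+s(r)>6.8$. You instead take the weighted sum $3\cdot\eq{3}+2\cdot\eq{7}+2\cdot\eq{8}$ so that $s(r)$ and $s(x)$ cancel, arriving at $7\,b>37+\frac{3}{2}s(A_{\before r})+3\,s(A_{\before x})$, and then eliminate the $A$-loads via $s(A_{\before r}),s(A_{\before x})\ge s(A_{\before j})\ge s(B_{\before j})=b-s(j)$ and $s(j)<4$; I have checked the arithmetic and it does yield $b>14.8-1.8\,s(j)>7.6$. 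Your argument never touches \eq{9b} and is independent of Lemma~\ref{lem:eplusr} (both versions still need Lemma~\ref{lem:onej} and the renaming invariant), so it is self-contained at the price of a heavier linear combination, whereas the paper buys a two-line proof by reusing the preceding lemma. Two cosmetic slips, neither of which affects validity: your parenthetical bookkeeping of the $s(r)$ coefficients is garbled (after flipping \eq{8} the net coefficient of $s(r)$ is $+1$ on each side, not $3$), and the aside claiming that the pointwise bound $s(A_{\before r}),s(A_{\before x})>3$ would give only $b>43/6$ is miscomputed --- it gives $b>50.5/7\approx 7.21$, which is still short of $7.6$, so your actual point stands.
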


\begin{proof}

The same argument as in Lemma \ref{lem:eplusr} gives us $s(e) + s(j) \ge  s(B_{\before j}) + s(j)$.
We therefore aim to prove $s(B_{\before j}) + s(j) > 7.6$. Summing up \eq{7}
and \eq{9b} and using $s(B_{\before x}) = s(B_{\before j}) + s(j)$, we get

\[ 2s(B_{\before j}) + 2s(j) + s(x) + 4 - s(A_{\before x}) > 9 + \frac{s(A_{\before x}) + s(r)}{2} + s(r) + s(x)\]
\[ 2s(B_{\before j}) + 2s(j) > 5 + \frac32 \big(s(A_{\before x})  + s(r)\big).\]

We now apply the bound $s(A_{\before x}) + s(r) \ge  s(B_{\before j}) + s(r) > 6.8$, the second inequality being Lemma \ref{lem:eplusr}. We get:

\[ 2s(B_{\before j})+ 2s(j) > 5+ 10.2, \]
and finally $s(B_{\before j}) + s(j) > 7.6$, completing the proof.\qed

\end{proof}

\begin{lemma}\label{lem:rplusj}
Suppose that $j$ and $r$ are items as described in Definition \ref{dfn:fouritems} and suppose
also that no good situation was reached by the algorithm \tbalg. Then, $s(r) + s(j) > 7$.
\end{lemma}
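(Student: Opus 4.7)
The plan is to derive $s(r) + s(j) > 7$ directly from inequality~\eq{3}, without appealing to Lemmas~\ref{lem:eplusr} or~\ref{lem:eplusj}. The starting point will be \eq{3}, rewritten via Lemma~\ref{lem:onej}: since $j$ is the only item placed into $B$ during Step~\step{qstep}, we have $s(B_{\before r}) = s(B_{\before j}) + s(j)$, and \eq{3} takes the form
\[
s(B_{\before j}) + s(j) + s(r) > 9 + \tfrac{1}{2}\, s(A_{\before r}).
\]

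The next step is to get rid of $s(A_{\before r})$ by trading it for $s(B_{\before j})$. The renaming in Step~\step{rename} ensures $s(A_{\before j}) \ge s(B_{\before j})$, and $s(A)$ is non-decreasing during Step~\step{qstep}, so $s(A_{\before r}) \ge s(A_{\before j}) \ge s(B_{\before j})$. Substituting the weaker bound $s(A_{\before r}) \ge s(B_{\before j})$ into the previous display and canceling $s(B_{\before j})/2$ from both sides will give $s(j) + s(r) > 9 - \tfrac{1}{2}\, s(B_{\before j})$.

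To close the argument, I would invoke Observation~\ref{obs:1a} together with the renaming to conclude $s(B_{\before j}) \le s(A_{\before j}) < 4$; hence $\tfrac{1}{2}\, s(B_{\before j}) < 2$ and $s(j) + s(r) > 7$, as required.

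I do not anticipate any real obstacle. The one step deserving attention is the monotonicity chain $s(A_{\before r}) \ge s(A_{\before j}) \ge s(B_{\before j})$, which is exactly what allows the $s(A_{\before r})/2$ term on the right of \eq{3} to be absorbed into the $s(B_{\before j})$ on the left; after that, the conclusion is a one-line arithmetic manipulation using only $s(B_{\before j}) < 4$.
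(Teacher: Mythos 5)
Your proposal is correct and is essentially identical to the paper's own proof: both start from inequality \eq{3} with $s(B_{\before r}) = s(B_{\before j}) + s(j)$, absorb the $\tfrac{1}{2}s(A_{\before r})$ term using the chain $s(A_{\before r}) \ge s(A_{\before j}) \ge s(B_{\before j})$, and finish with $s(B_{\before j}) < 4$. No issues.
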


\begin{proof}
Starting with \eq{3}:

\[ s(B_{\before j}) + s(j) + s(r) > 9 + \frac{s(A_{\before r})}{2}  \]
and using $s(B_{\before j})\le s(A_{\before j})\le s(A_{\before r})$ together with $s(B_{\before j}) < 4$, we have:
\[ s(j) + s(r) > 9 + \left( \frac{s(A_{\before r})}{2} - s(B_{\before j}) \right) \ge  9 - \frac{s(B_{\before j})}{2} > 7. \qquad \qed \]

\end{proof}

\begin{lemma}\label{lem:xplus}
Suppose that $x,e,j,r$ are items as described in Definition \ref{dfn:fouritems}.
Suppose also that no good situation was reached by the algorithm \tbalg.
 Then, $s(x) > 4$ and $s(x) + \min(s(j),s(e),s(r)) > 6.8$.
\end{lemma}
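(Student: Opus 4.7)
The plan is to translate the pairwise bounds of Lemmas \ref{lem:eplusr}--\ref{lem:rplusj} into individual lower bounds on $s(e), s(j), s(r)$, then use inequality \eq{7} to lower-bound $s(x)$, and finally combine these to reach both conclusions of the lemma. The renaming invariant will be the key tool for eliminating the unknown $s(A_{\before x})$ from \eq{7}.

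First I would extract individual bounds from the earlier lemmas using $s(B_{\before j}), s(j) < 4$ from Observation \ref{obs:1a}. Lemma \ref{lem:eplusr} then gives $s(r) > 2.8$, and Lemma \ref{lem:eplusj} gives both $s(j) > 3.6$ and $s(B_{\before j}) > 3.6$; since $s(e) \geq s(B_{\before j})$ (as shown at the start of the proof of Lemma \ref{lem:eplusr}), also $s(e) > 3.6$.

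Next, for $s(x) > 4$, I would start from \eq{7}, rewrite $s(B_{\before x})$ as $s(B_{\before j}) + s(j)$ (justified by Lemma \ref{lem:onej}), and eliminate $s(A_{\before x})$ using the Step \step{rename} invariant $s(A_{\before j}) \geq s(B_{\before j})$ together with the fact that $A$ only grows during Step \step{qstep}, so $s(A_{\before x}) \geq s(B_{\before j})$. Rearranging gives $s(x) > 9 - s(B_{\before j})/2 + s(r)/2 - s(j)$, and substituting $s(B_{\before j}) < 4$, $s(j) < 4$, $s(r) > 2.8$ yields $s(x) > 4.4$.

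To finish, I would note that in all cases $\min(s(e), s(j), s(r)) > 2.8$. If the minimum equals $s(r)$, then $s(x) + s(r) > 4 + 2.8 = 6.8$ (strict because both bounds are strict); otherwise the minimum exceeds $3.6$, so $s(x) + \min > 4 + 3.6 > 6.8$. The only real subtlety I anticipate is the one-line manipulation of \eq{7} in which the renaming invariant is used to get rid of $s(A_{\before x})$; everything else reduces to arithmetic with the bounds produced by the previous lemmas.
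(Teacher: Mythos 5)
Your proposal is correct and follows essentially the same route as the paper: both obtain $s(x)>4$ from the fact that $x$ pushes $B$ over the limit $q>12$ while $s(B_{\before x})=s(B_{\before j})+s(j)<8$ (i.e., inequality \eq{7}), and both get $\min(s(e),s(j),s(r))>2.8$ from the pairwise lower bounds of Lemmas \ref{lem:eplusr}--\ref{lem:rplusj} together with the upper bound of $4$ on each of $e,j,r$. Your intermediate bounds $s(e),s(j)>3.6$ and the sharper $s(x)>4.4$ are merely slightly stronger versions of the same steps.
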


\begin{proof}
With all the previous lemmas in place, the proof is simple enough. We first
observe that $s(x) > 4$; this is true because $s(B_{\before j}) + s(j) < 4+4 = 8$ and
$s(B_{\before x}) + s(x) > q ≥ 12$.

Since the remaining three items $\{e,r,j\}$ are bounded from above
by $4$ but their pairwise sums are always at least $6.8$, we have
that $\min\{e,r,j\} > 2.8$, which along with $s(x) > 4$ gives us the
required bound.\qed
\end{proof}

From Lemmata \ref{lem:eplusr}, \ref{lem:eplusj}, \ref{lem:rplusj} and
\ref{lem:xplus} we get a portion of Claim \ref{clm:nobigitems}: if
three big items $b_1, b_2, b_3$ exist in the offline layout, then one of these
items needs to be packed together with at least two items from the set
$\{e,j,r,x\}$, and therefore $\min(s(b_1),s(b_2),s(b_3)) < 9.2$. The
second bound $s(r) > 2.8$ follows from Lemma \ref{lem:eplusr} and the
fact that $s(e) < 4$.

All that remains is to prove the bound on $s(B)$, which we do in the following lemma:

\begin{lemma}\label{lem:boundonB}
Suppose that no good situation was reached in the algorithm \tbalg during Step~{\rm \step{qstep}}.
Then, after placing $x$ into $B$ in Step~{\rm \step{b1}}, it holds that $s(B) < 12.8$.
\end{lemma}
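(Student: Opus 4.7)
The plan is to proceed by contradiction: suppose $s(B) = s(B_{\before x}) + s(x) \ge 12.8$ after $x$ is placed in Step~\step{b1}, and derive incompatible upper and lower bounds on $s(x)$. The strategy is to chain inequalities \eq{8} and \eq{6}, bootstrapped through Lemma~\ref{lem:eplusr}, to progressively tighten intervals for $s(A_{\before x})$, $s(r)$, and $s(B_{\before x})$.

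I would first extract a two-sided bound on $s(A_{\before x}) + s(r)$. Inequality \eq{8} gives $s(x) < s(B_{\before x}) + 4 - s(A_{\before x}) - s(r)$, and combining this with $s(x) \ge 12.8 - s(B_{\before x})$ yields $s(A_{\before x}) + s(r) < 2 s(B_{\before x}) - 8.8$. Since $s(B_{\before x}) = s(B_{\before j}) + s(j) < 8$, this gives $s(A_{\before x}) + s(r) < 7.2$. On the other hand, Lemma~\ref{lem:eplusr} together with $s(A_{\before x}) \ge s(A_{\before j}) \ge s(B_{\before j})$ (from Step~\step{rename} and the fact that nothing leaves $A$) yields $s(A_{\before x}) + s(r) > 6.8$. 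Substituting back into the first bound forces $s(B_{\before x}) > 7.8$, so $s(j) < 4$ gives $s(B_{\before j}) > 3.8$ and hence $s(A_{\before x}) > 3.8$.

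Next I would use \eq{6} to push $s(r)$ upward. Inserting $s(B) \ge 12.8$ into \eq{6} gives $s(A_{\before x}) < 2.2 + s(r)/2$; combined with $s(A_{\before x}) > 3.8$ this yields $s(r) > 3.2$, so $s(A_{\before x}) + s(r) > 7$. Pairing with the upper bound $s(A_{\before x}) + s(r) < 7.2$ gives $s(r) < 3.4$, which fed back into $s(A_{\before x}) < 2.2 + s(r)/2$ produces $s(A_{\before x}) < 3.9$. Then $s(B_{\before j}) \le s(A_{\before x}) < 3.9$ combined with $s(j) < 4$ gives $s(B_{\before x}) < 7.9$.

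To close, \eq{8} with the newly derived $s(A_{\before x}) + s(r) > 7$ and $s(B_{\before x}) < 7.9$ yields
$s(x) < s(B_{\before x}) + 4 - (s(A_{\before x}) + s(r)) < 4.9$, whereas $s(B) \ge 12.8$ with $s(B_{\before x}) < 7.9$ forces $s(x) > 4.9$: contradiction. The hard part is identifying this precise chain of deductions; each individual inequality alone gives only the weaker bound $s(B) < 13.3$ or so (e.g.\ from averaging \eq{5} and \eq{6} with $s(A_{\before x}) + s(r) > 6.8$), and it is only the interplay between \eq{8} and \eq{6}, iterated twice through Lemma~\ref{lem:eplusr}, that closes the residual $0.5$ gap. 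Once the chain is identified the arithmetic is routine, matching the authors' remark that the claim is essentially an infeasibility certificate for a small linear program.
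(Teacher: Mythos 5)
Your proof is correct: every inequality you invoke is available at this point of the analysis (you implicitly use Lemma~\ref{lem:onej} for $s(B_{\before x})=s(B_{\before j})+s(j)$, Observation~\ref{obs:1a} for $s(B_{\before x})<8$, Step~\step{rename} for $s(B_{\before j})\le s(A_{\before x})$, and Lemma~\ref{lem:eplusr}), and the arithmetic checks out, with the final squeeze $4.9 < s(x) < 4.9$ strict on both sides. That said, it is not really a different certificate from the paper's. First observe that your \eq{8} and the paper's \eq{9b} are the same inequality: both rearrange to $s(A_{\before x}) + s(x) + s(r) < s(B_{\before x}) + 4$. The paper simply adds \eq{6} and \eq{9b}, cancels $s(B_{\before x})$, and obtains $s(x) + s(A_{\before x}) < 9.5 - s(r)/4$; then $s(B) \le s(A_{\before x}) + s(x) + s(j) < 9.5 - s(r)/4 + 4 < 12.8$, using only $s(r) > 2.8$ (which follows from Lemma~\ref{lem:eplusr} together with $s(B_{\before j}) < 4$) and $s(j) < 4$. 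So your closing remark --- that no single combination closes the gap and that the interplay must be ``iterated twice'' through Lemma~\ref{lem:eplusr} --- is mistaken: a single linear combination of the very two inequalities you use does the job, which is exactly what the authors' LP-infeasibility comment promises. Your contradiction argument reaches the same endpoint by repeatedly tightening interval bounds on $s(A_{\before x})$, $s(r)$ and $s(B_{\before x})$, at the cost of one extra hypothesis ($s(B_{\before x})<8$) and considerably more bookkeeping; it buys nothing additional, but it is a valid alternative derivation.
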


\begin{proof}

As before, we will use our inequalities to derive the desired
bound. As we have argued above, Lemma \ref{lem:eplusr} gives us that
$s(r) > 2.8$.

We sum up inequalities \eq{6} and \eq{9b}, getting:

\begin{align*}
 s(B_{\before j}) + s(j) + 2s(x) + s(A_{\before x}) + s(r) &< 15 + \frac{s(r)}{2} + s(B_{\before j}) + s(j) + 4 - s(A_{\before x})\\ 
2s(x) + 2s(A_{\before x}) &< 19 - \frac{s(r)}{2}. \\
s(x) + s(A_{\before x}) &< 9.5 - \frac{s(r)}{4}. 
\end{align*}
To finish the bound we need $s(B_{\before j}) \le s(A_{\before j}) \le s(A_{\before x})$ (this is true because we reorder the bins
$B$, $A$ in Step~\step{rename}), $s(r) > 2.8$ and $s(j) < 4$. Plugging them in, we get:

\[ s(B) = s(B_{\before j}) + s(j) + s(x) \le s(A_{\before x}) + s(j) + s(x) \]
\[ < 9.5 - \frac{s(r)}{4} + s(j) < 9.5 - 0.7 + 4 < 12.8. \qquad \qed\]
\end{proof}

With Lemma \ref{lem:boundonB} proven, we have finished the proof of Claim \ref{clm:nobigitems}
and completed the analysis of the algorithm \tbalg.

\section{Lower bound}\label{sec:lowerbound}

In this section, we describe our lower bound technique for a small
number of bins. We build on the paper of Gabay, Brauner and Kotov
\cite{gabay2013lbv2} but significantly change the algorithm, both
conceptually and technically.

On the conceptual side, we propose a different algorithm for computing
the offline optimum packing, suggest new ways of pruning the game tree
and show how the alpha-beta pruning of \cite{gabay2013lbv2} can be
skipped entirely.

On the technical side, we reimplement the algorithm of
\cite{gabay2013lbv2}, gaining significant speedup from the
reimplementation alone. While the lower bound search program of
\cite{gabay2013lbv2} was written in Python, employed CSP solvers and had
unrestricted caching, our program is written in C, is purely
combinatorial and it sets limits on the cache size, making time the
only exponentially-increasing factor.


With these improvements, we were able to find an improved lower bound
for \binstretch for three bins, namely $45/33 = 1.\overline{36}$.

We also present the lower bound of $19/14 \approx 1.357$ for $m = 4$ and $m = 5$.
Note that this is the first non-trivial lower bound for $m = 5$ and that
our result is independent from the lower bound of $19/14$ for $m = 4$
by Gabay et al.~\cite{gabay2013lbv2}.
\smallskip

To see the strength of our improvements, consider the scaling factor $K$
and items of integer size. It is easy to see that a general game
tree search requires exponential running time with respect to
$K$. The algorithm of \cite{gabay2013lbv2} is able to check all $K \le 20$
(for $m=3$) before claiming that ``even with many efficient cuts, we cannot tackle
much larger problems.''

In contrast, our proposed algorithm is able to check all $K \le 41$ and
is fast enough to produce results for $m = 4$ and $m = 5$.

\subsection{Lower bound technique}\label{sec:technique}

We now describe our lower bound technique. To simplify our arguments,
we describe the technique only for $m = 3$. We discuss the
pecularities of the generalization to any fixed $m$ in
Section~\ref{sec:generalization}.

As with many other online algorithms, we can think of \binstretch as a
two player game. The first player (\algo) is presented
with an item $i$. \algo's goal is to pack it into $m$ bins of capacity
$S$. This mimics the task of any algorithm for \binstretch. The other
player (\adversary) decides which item to present to the \algo in the
next step. The goal of the \adversary is to force \algo to overpack at
least one bin.

It is clear that knowing the game tree for a parameter $S$ of the
aforementioned game is equal to knowing whether there is an algorithm
for \binstretch with stretching factor $S$.

We are interested primarily in the lower bound. Therefore, it makes
sense to slightly reformulate the previous game:

\begin{itemize}
\item The player \algo wins if it can pack all items into bins with capacity strictly less than $S$.
\item The player \adversary wins if it can force \algo to pack a bin with load $\ge S$ while making
  sure that the \binstretch guarantee is satisfied.
\end{itemize}

This way, a winning strategy for the player \adversary immediately implies that
no online algorithm for \binstretch with stretching factor less than
$S$ exists.

The two main obstacles to implementing a search of the described two player
game are the following:

\begin{enumerate}
\item \adversary can send an item of arbitrarily small size;
\item \adversary needs to make sure that at any time of the game, an offline
  optimum can pack the items arrived so far into three bins of size $T$.
\end{enumerate}

To overcome the first problem, it makes sense to create a sequence of
games based on the granularity of the items that can be packed. A
natural granularity for the scaled game are integral items, which
correspond to multiples of $1/T$ in the non-scaled problem.

The second problem increases the complexity of every game turn of the
\adversary, as it needs to run a subroutine to verify the guarantee
for the next item it wishes to place.

Note that the ideas described above have been described previously in \cite{gabay2013lbv2}.
\medskip

To precisely formulate our setting, we first define one state of a game:

\begin{definition}\label{dfn:binconf}
  For given parameters $S \in \Nat, T \in \Nat$, a \textbf{bin configuration} is
  a tuple $(a,b,c,\It)$, where
  \begin{compactitem}
    \item $a,b,c \in \{0,1,\ldots,S\}$ denote the current sorted loads of the bins, i.e., $a\ge b\ge c$,
    \item $\It$ is a multiset with ground set $\{1,2,\ldots,T\}$ which lists the items used in the bins.
  \end{compactitem}

  Additionally, in a bin configuration, it must hold:
  \begin{compactitem}
    \item that there exists a packing of items from $\It$ into three bins with loads exactly $a,b,c$,
    \item that there exists a packing of items from $\It$ into three bins that does not exceed $T$ in any bin.
  \end{compactitem}  
\end{definition}

It is clear that every bin configuration is a valid state of the game
with \adversary as the next player. We may also observe that the existence of an online
algorithm for \binstretch implies an existence of an oblivious algorithm with the same
stretching factor that has access only to the current bin configuration $B$ and the incoming item $i$.

Using the concept of bin configuration and the previous two facts, we
may formally define the game we investigate:

\begin{definition}
For a given $S \in  \Nat, T \in  \Nat$, the \textbf{bin stretching game} $\Game(S,T)$ is the following two
player game:

\begin{itemize} 
  \item There are two players named \adversary and \algo. The player \adversary starts.
  \item Each turn of the player \adversary is associated with a bin configuration $B = (a,b,c,\It)$.
    The start of the game is associated with the bin configuration $(0,0,0,\emptyset)$.
  \item The player \adversary receives a bin configuration $B$. Then,
    \adversary selects a number $i$ such that the multiset $\It \cup
    \{i\}$ can be packed by an offline optimum into three bins of
    capacity $T$. The pair $(B,i)$ is then sent to the player \algo.

\item The player \algo receives a pair $(B,i)$. The player \algo has
to pack the item $i$ into the three bins as described in $B$ so that
each bin has load strictly less than $S$. \algo then updates the
configuration $B$ into a new bin configuration, denoted $B'$. \algo
then sends $B'$ to the player \adversary.
\end{itemize}
For a bin configuration $B$ we define recursively whether it is won or lost for player \adversary:
\begin{itemize}
  \item If the player \algo receives a pair $(B,i)$ such that it cannot pack
    the item according to the rules, the bin configuration $B$ is won for player \adversary.
    
  \item If the player \adversary has no more items $i$ that it can send from a
    configuration $B$, the bin configuration $B$ is lost for player \adversary.
    
  \item For any bin configuration $B$ where the player \adversary has
    a possible move, the configuration is won for player
    \adversary if and only if the game ends in a bin configuration $C$
    that is won for the player \adversary no matter which decision is
    made by the player \algo at any point.
\end{itemize}
\end{definition}

\begin{definition}
We say that a game $\Game(S,T)$ is a \textbf{lower bound} if and only
if the bin configuration $(0,0,0,\emptyset)$ is won for the player \adversary.
\end{definition}

\subsection{The minimax algorithm}\label{sec:minimax}

Our implemented algorithm is a fairly standard implementation of the
minimax game search algorithm.  The pecularities of our algorithm
(caching, pruning, and other details) are described in the following sections.

One of the differences between our algorithm and the algorithm of
Gabay et al. \cite{gabay2013lbv2} is that our algorithm makes no use of
alpha-beta pruning -- indeed, as every bin configuration is either won
for \algo or won for \adversary, there is no need to use this type of
pruning.

The following procedures return 0 if the bin configuration is won for
the player \adversary; otherwise they return 1 (player \algo wins).

\algobox{{\bf Procedure} $\evaladv$:

\noindent Input is a bin configuration $B = (a,b,c,\It)$.
\begin{compactenum}[(1)]
\item Check if the bin configuration is cached (Section \ref{subsec:caching}); if so, output the value found in cache and return.
\item Create a list $L$ of items which can be sent as the next step of the player \adversary (Section \ref{subsec:test}).
\item For every item size $i$ in the list $L$:
\item\indentskip Recurse by running $\evalalg(B,i)$.
\item\indentskip If $\evalalg(B,i)$ returns $0$, stop the cycle, store the configuration in the cache and end $\evaladv$ with value $0$.
\item\indentskip Otherwise, continue with the next item size.
\item If the evaluation reaches this step, store the configuration in the cache and return value $1$.
\end{compactenum}
}

\algobox{{\bf Procedure} $\evalalg$:

\noindent Input is a bin configuration $B = (a,b,c,\It)$ and item $i$.
\begin{compactenum}[(1)]
\item If applicable, prune the tree using known algorithms (Section \ref{subsec:gs}).
\item For any one of the three bins:
\item\indentskip If $i$ can be packed into the bin so that its load is less than $T$:
\item\indentskip\indentskip Create a configuration $B'$ that corresponds to this packing.
\item\indentskip\indentskip Run $\evaladv(B')$. If $\evaladv(B')$ returns 1, exit the procedure with value 1 as well.
\item\indentskip\indentskip Otherwise, continue with another bin.
\item If we reach this step, no placement of $i$ results in victory of \algo. We return 0 and exit.
\end{compactenum}
}

\algobox{{\bf Procedure} $\Main$:

\noindent Input is a bin configuration $B = (a,b,c,\It)$. 
\begin{compactenum}[(1)]
\item Fix parameters $S,T$.
\item Run $\evaladv(B)$.
\item If $\evaladv(B)$ returns $1$ (the game is won for player \algo), report failure.
\item Otherwise report success and output the game tree.
\end{compactenum}
}

\subsection{Verifying the offline optimum guarantee}\label{subsec:test}

When we evaluate a turn of the \adversary, we need to create the list
$L = \{0,1,\ldots,y\} ⊆ \{0,1,\ldots,T\}$ of items that \adversary can actually
send while satifying the \binstretch guarantee. We employ the
following steps:

\begin{enumerate}
\item First, we calculate a lower and upper bound $LB \leq UB$ on the maximal value $y$ of $L$.
\item Then, we do a linear search on the interval $\{UB,UB-1,\ldots,LB\}$ using a
procedure $\Test$ that checks a single multiset $\It'$, where $\It'$ is $\It$ plus the item in question and $\It$ is the current multiset of items.
\item The first feasible item size is the desired value of $y$.
\end{enumerate}

Note that in the second step we could also implement a binary search over the interval,
but in our experiments the difference between $UB$ and $LB$ was very small (usually at most 4),
thus a linear search is quicker.

\noindent {\bf Upper and lower bounds.} The running time of procedure
$\Test$ will be cubic in terms of $T$ in the worst case. We therefore
reduce the number of calls to $\Test$ by creating good lower and upper
bounds on the maximal item $y$ which \adversary can send.

To find a good lower bound, we employ a standard bin packing algorithm
called \bfd. \bfd packs items from $\It$ into three bins of capacity
$T$ with items in decreasing order, packing an item into a bin where
it ``fits best'' -- where it minimizes the empty space of a bin. \bfd
is a linear-time algorithm (it does not need to sort items in $\It$,
as the implementation of $\It$ stores them in a sorted order).

Our desired lower bound $LB$ will be the maximum empty space over all
three bins, after \bfd has ended packing. Such an item can always be
sent without invalidating the \binstretch guarantee.

Our upper bound $UB$ is comparatively simpler; for a bin configuration
$(a,b,c,\It)$, it will be set to $\min(T,3T - a - b -c)$. Clearly, no larger
item can be sent without raising the total size of all items above
$3T$.

\noindent {\bf Procedure $\Test$.} Procedure $\Test$ is a sparse
modification of the standard dynamic programming algorithm for
\textsc{Knapsack}. Given a multiset $\It, |\It| = n$, on input, our
task is to check whether it can be packed into three bins (knapsacks)
of capacity $T$ each.

We use a queue-based algorithm that generates a queue $Q_i$ of all
valid triples $(a,b,c)$ that can arise by packing the first $i$ items.

To generate a queue $Q_{i+1}$, we traverse the old queue $Q_i$ and add
the new item $\It[i+1]$ to the first, second and third bin, creating
up to three triples that need to be added to $Q_{i+1}$.

We make sure that we do not add a triple several times during one
step, we mark its addition into a auxilliary $\{0,1\}$ array $F$. Note
that the queue $Q_{i+1}$ needs only $Q_{i}$ and the item $\It[i+1]$ for
its construction, and so we can save space by switching between queues
$Q_1$ and $Q_2$, where $Q_{2i+1} = Q_{1}$ and $Q_{2i} = Q_{2}$.

The time complexity of the procedure $\Test$ is $\O(|\It|\cdot T^3)$
in the worst case.  However, when a bin configuration contains
large items, the size of the queue is substantially limited and the
actual running time is much better.

\algobox{\textbf{Procedure} $\Test$:

\noindent Input is a multiset of items $\It$.
\begin{compactenum}[(1)]
\item Create two queues $Q_1, Q_2$.
\item Add the triple $(\It[1],0,0)$ to $Q_1$.
\item For each item $i$ in the multiset $\It$, starting with the second item:
\item\indentskip For each triple $(a,b,c) \in  Q_1$:
\item\indentskip\indentskip If $a+s(i)\le T$:
\item\indentskip\indentskip\indentskip Add the triple $(a+s(i),b,c)$ to $Q_2$ unless $F[a+s(i),b,c] = 1$.
\item\indentskip\indentskip\indentskip Set $F[a+s(i),b,c] = 1$.
\item\indentskip\indentskip Do the same for triples $(a,b+s(i),c)$ and $(a,b,c+s(i))$.
\item\indentskip Swap the queues $Q_1$ and $Q_2$.
\item Return True if the queue $Q_1$ is non-empty, False otherwise.
\end{compactenum}
}

{\bf Notes:} We employ two small optimizations that were not yet
mentioned. First, we sort the numbers $(a,b,c)$ in each triple to
ensure $a \ge b \ge c$, saving a small amount of space and time.
Second, we use one global array $F$ in order to avoid initializing it
with every call of the procedure $\Test$.

It is also worth noting that we could alternatively implement the
procedure $\Test$ using integer linear programming or using a CSP
solver (which has been done in \cite{gabay2013lbv2}). However, we
believe our sparse dynamic programming solution carries little
overhead and for large instances it is much faster than the CSP/ILP
solvers.

\subsection{Caching}\label{subsec:caching}

Our minimax algorithm employs extensive use of caching. We cache any
solved instance of procedure $\Test$ as well as any evaluated bin
configuration $B$ with its value. Note that we do not cache results of
Procedure $\evalalg$.

\noindent\textbf{Hash table limitation.} We store a large hash table of fixed size,
with each entry being a separate chain. With each node in a chain we store the number
of accesses. When a chain is to be filled over a fixed limit, we eliminate
a node with the least number of accesses.

To allow hash tables of variable size, our hash function returns a
$64$-bit number, which we trim to the desired size of our hash table.

In our definition of a bin configuration $(a,b,c,\It)$, we do not
require the loads $a,b,c$ to be sorted. However, configurations which
differ only by a permutation of the values $a,b,c$ are equivalent, and
so we sort these numbers when inserting a bin configuration into the
hash table.

\noindent\textbf{Hash function.} Our hash function is based on Zobrist hashing \cite{zobrist}, which
we now describe.

For each bin configuration, we count occurences of items, creating
pairs $(i,f) \in \{1,\ldots,T\} \times \{0,1,\ldots,3T\}$, where $i$ is the item type
and $f$ its frequency.  As an example, a bin configuration
$(3,2,3,\{1,1,1,1,2,3\})$ forms pairs $(1,4), (2,1), (3,1), (4,0),
(5,0)$ and so on.

At the start of our program, we associate a random $64$-bit number with each
pair $(i,f)$. We also associate a $64$-bit number for each possible load
of bin $A$, bin $B$ and bin $C$.

The Zobrist hash function is then simply a XOR of all associated
numbers for a particular bin configuration.

The main advantage of this approach is fast computation of new hash
values.  Suppose that we have a bin configuration $B$ with hash
$H$. After one round of the player \adversary and one round of the
player \algo, a new bin configuration $B'$ is formed, with one new
item placed. Calculating the hash $H'$ of $B'$ can be done in time
$\O(1)$, provided we remember the hash $H$ -- the new hash is
calculated by applying XOR to $H$, the new associated values, and the
previous associated values which have changed.

\noindent\textbf{Caching of the procedure $\Test$}. So far, we have described
caching of the bin configurations.  We also use the same approach for
caching the values of the procedure $\Test$.  To see the usefulness,
note that the procedure $\Test$ does not use the entire bin
configuration $B=(a,b,c,\It)$ as input, but only the multiset
$\It$. Therefore, we aim to eliminate overhead that is caused by
calling $\Test$ on a different bin configuration, but with the same
multiset $\It$.

Our hash function and hash table approaches are the same in both
cases.

\subsection{Tree pruning}\label{subsec:gs}

Alongside the extensive caching described in Subsection
\ref{subsec:caching}, we also prune some bin configurations where it is
possible to prove that a simple online algorithm is able to finalize
the packing. Such a bin configuration is then clearly won for player
\algo, as it can follow the output of the online algorithm.

Such situation are called \emph{good situations}, same as in
Section~\ref{sec:gs}. We will make use of the first five good
situations from Section \ref{sec:gs}.

Recall that in the bin stretching game $\Game(S,T)$, the player \algo
is trying to pack all three bins with capacity strictly below $S$,
which we can think of as capacity $S-1$.  Therefore, we set $S' =
S-1$ and use $S'$ in our definitions.

We restate the good situations \gs1 to \gs5 for an instance of
$\Game(S',T)$ for general $S',T$ with $\alpha = S' - T$ satisfying
$\alpha \ge T/3$, while in Section \ref{sec:gs} we formulate the
good situations only for $\Game(22,18)$. The proofs are however
equivalent and we omit them.

\setcounter{goodsit}{0}
\begin{gengoodsit}\label{lem:newgs1}
Given a bin configuration $(a,b,c,\It)$ such that $a + b \geq 2T - \alpha$ and $c$ is
arbitrary,
there exists an online algorithm that packs all remaining items into
three bins of capacity $S'$.\qed
\end{gengoodsit}

\begin{gengoodsit}\label{lem:newgs2}
Given a bin configuration $(a,b,c,\It)$ such that $a \in [T-2\alpha, \alpha]$ and $b$ and
$c$ are arbitrary, there exists
an online algorithm that packs all remaining items into three bins
of capacity $S'$.\qed
\end{gengoodsit}

\begin{gengoodsit}\label{lem:newgs3}
Given a bin configuration $(a,b,c,\It)$ such that $a \in [\frac{3}{2}(T-\alpha),S']$ and either 
{\rm(i)} $c \ge \alpha$ and $b$ is arbitrary or {\rm(ii)} $b+c\ge S'$, 
there exists an online algorithm that packs all remaining items into
three bins of capacity $S'$.\qed 
\end{gengoodsit}

\begin{gengoodsit}\label{lem:newgs4}
Given a bin configuration $(a,b,c,\It)$ such that 
$a + b \ge \frac{3}{2}(T-\alpha) + c/2$, $b < T-2\alpha$, and $c < T-2\alpha$,
there exists an online algorithm that packs all remaining items into
three bins of capacity $S'$.\qed
\end{gengoodsit}

\begin{gengoodsit}\label{lem:newgs5}
Suppose that we are given a bin configuration $(a,b,c,\It)$ such that
an item $i$ with $s(i)>\alpha$ is present in the multiset $\It$ and the
following holds: $a \ge  s(i), b \ge  (3T - 7\alpha)/2, b\le\alpha, c = 0$.  Then there exists an
algorithm that packs all remaining items into three bins of capacity
$S'$.\qed
\end{gengoodsit}

\subsection{Results}\label{sec:results}

Table \ref{table:results} summarizes our results. The paper of Gabay, Brauner
and Kotov \cite{gabay2013lbv2} contains results up to the denominator 20; we
include them in the table for completeness. Results after the denominator 20 are
new. Note that there may be a lower bound of size $56/41$ even though
none was found with this denominator; for instance, some lower bound
may reach $56/41$ using item sizes that are not multiples of
$1/41$.

\renewcommand{\arraystretch}{1.3}
\setlength{\tabcolsep}{5pt}
\begin{table}[t]
\begin{center}
\begin{tabular}{ c | @{\hskip 2em} l | @{\hskip 2em} l | @{\hskip 2em} l }\label{tab:results}
\textit{Target fraction} & \hskip -1.5em \textit{Decimal form} & \hskip -1.5em \textit{L. b. found} & \hskip -1.5em\textit{Elapsed time} \\
\hline
$19/14$ &  $1.3571$ & Yes & 2s. \\
$22/16$ & $1.375$ & No & 2s. \\
$26/19$ & $1.3684$ & No & 3s. \\
\hline
$30/22$ & $1.\overline{36}$ & No & 6s. \\
$33/24$ & $1.375$ & No & 5s. \\
$34/25$ & $1.36$ & \textbf{Yes} & 15s. \\
$37/27$ & $1.\overline{370}$ & No & 10s. \\
$41/30$ & $1.3\overline{6}$ & No & 32s.\\
$44/32$ & $1.375$ & No & 34s. \\
$45/33$ & $1.\overline{36}$ & \textbf{Yes} & 1min. 48s. \\
$48/35$ & $1.3714$ & No & 2min. 8s. \\
$52/38$ & $1.3684$ & No & 6min. 14s. \\
$55/40$ & $1.375$ & No & 3min. 6s. \\
$56/41$ & $1.3659$ & No &  30min. \\
\end{tabular}
\end{center}

\caption{Results produced by our minimax algorithm, along with elapsed
time. The column \textit{L. b. found} indicates whether a lower bound
was found when starting with the given granularity. Fractions lower
than $19/14$ and higher than $11/8$ are omitted. Results were computed
on a server with an AMD Opteron 6134 CPU and 64496 MB RAM. The size of
the hash table was set to $2^{25}$ with chain length $4$. In order to
normalize the speed of the program, the algorithm only checked for a
lower bound and did not generate the entire tree in the \textbf{Yes}
cases.}\label{table:results}
\end{table}

\subsection{Lower bound for four and five bins}\label{sec:generalization}

The notion of bin configuration (Definition \ref{dfn:binconf}) as well
as most of the minimax algorithm can be straightforwardly generalized
for $m>3$. When generalizing the algorithm for larger $m$, one must
expect a slowdown, as the complexity of the sparse dynamic programming
from Section~\ref{subsec:test} is now $\O(|\It|\cdot T^m)$.

One notion that does not generalize very well are the good situations
of Section~\ref{subsec:gs}. For instance, the formula $a + b \ge (m-1)T -
\alpha$ in the statement of Good Situation \ref{lem:newgs1} will be much
less useful as $m$ grows. Some good situations, like Good Situation
\ref{lem:newgs2}, have no clear generalization for growing
$m$.

Therefore, we disable the pruning using good situations whenever
computing a lower bound for $m>3$.

Despite a significant increase in time complexity, we were able to
produce results for $m=4$ and $m=5$. See Table \ref{table:mfour} for
our results on four and five bins.

\begin{table}[t]
\begin{center}
\begin{tabular}{ c | @{\hskip 2em} l | @{\hskip 2em} l | @{\hskip 2em} l | @{\hskip 2em} l }
\textit{Number of bins} & \hskip -1.5em \textit{Target fraction} & \hskip -1.5em \textit{Decimal form} & \hskip -1.5em \textit{L. b. found} & \hskip -1.5em\textit{Elapsed time} \\
\hline
$4$ bins & $19/14$ &  $1.3571$ & \textbf{Yes} & 18s. \\
$5$ bins & $19/14$ &  $1.3571$ & \textbf{Yes} & 25min. \\
\end{tabular}
\end{center}
\caption{Results produced by our minimax algorithm in the case of $4$ and $5$ bins.
Tested on the same machine and with the same parameters as in Table~\ref{table:results}.
}\label{table:mfour}
\end{table}

\subsection{Verification of the results}\label{sec:verification}

We give a compact representation of our game tree for the lower bound
of $45/33$ for $m=3$, which can be found in Appendix
\ref{sec:appendix}. The fully expanded representation, as given by our
algorithm, is a tree on 11053 vertices.

For our lower bounds of $19/14$ for $m=4$ and $m=5$, the sheer size of
the tree (e.g. 4665 vertices for $m = 5$) prevents us from
presenting the game tree in its entirety. We therefore include the lower
bound along with the implementations, publishing it online at
\url{http://github.com/bohm/binstretch/}.

We have implemented a simple independent C++ program which verifies
that a given game tree is valid and accurate. While verifying our
lower bound manually may be laborious, verifying the correctness of
the C++ program should be manageable. The verifier is available along
with the rest of the programs and data.

{\small

}
\newpage
\appendix
\section{Appendix: Lower bound of 45/33}\label{sec:appendix}
\begin{figure}
  \centering
  \includegraphics[scale=0.8]{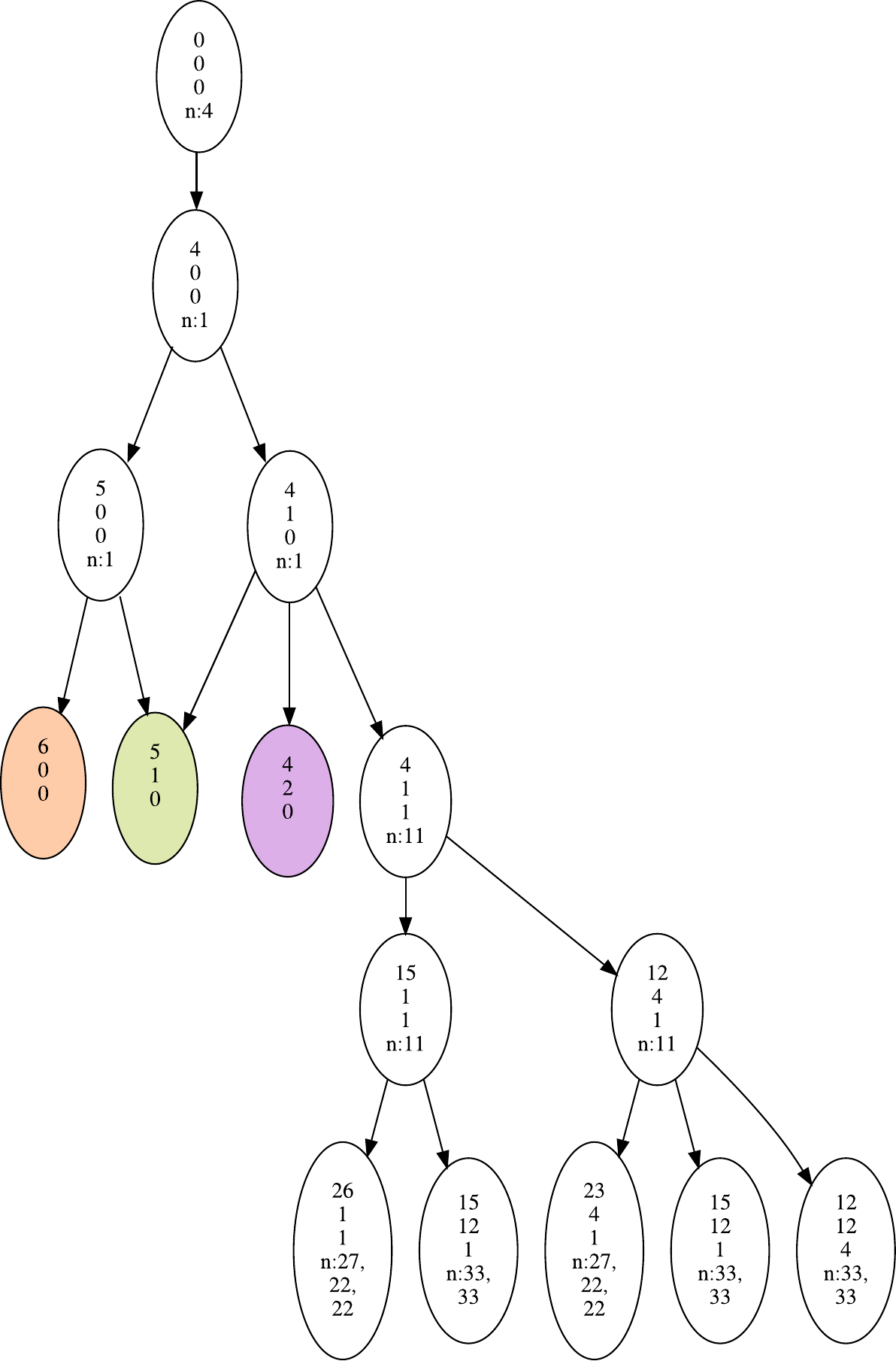}
  \caption{The beginning moves of the $45/33$ lower bound, scaled so
      that $T = 33$ and $S = 45$. The vertices contain the current
      loads of all three bins, and a string \texttt{n: $i$} with $i$
      being the next item presented by the \adversary. If there are
      several numbers after \texttt{n:}, the items are presented in
      the given order, regardless of packing by the player \algo. The coloured vertices are expanded in later figures.}
\end{figure}
\newpage
\newgeometry{margin=0.3in}
\begin{figure}
\centering
\includegraphics[scale=0.9]{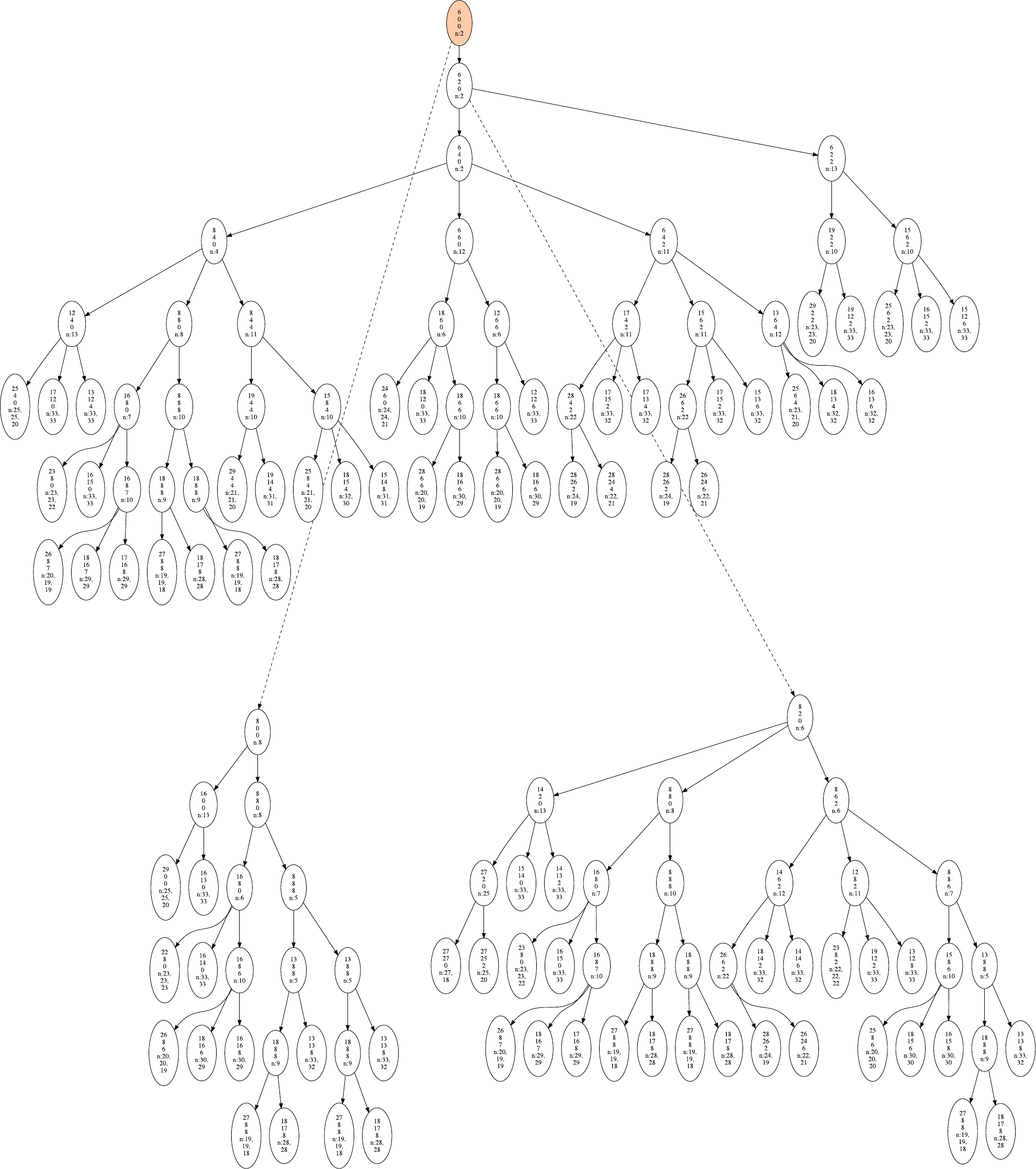}
\caption{Game tree for the lower bound of $45/33$, starting with the bin configuration $(6,0,0,\{4,1,1\})$.}
\end{figure}

\begin{figure}
\centering
\includegraphics[scale=0.9]{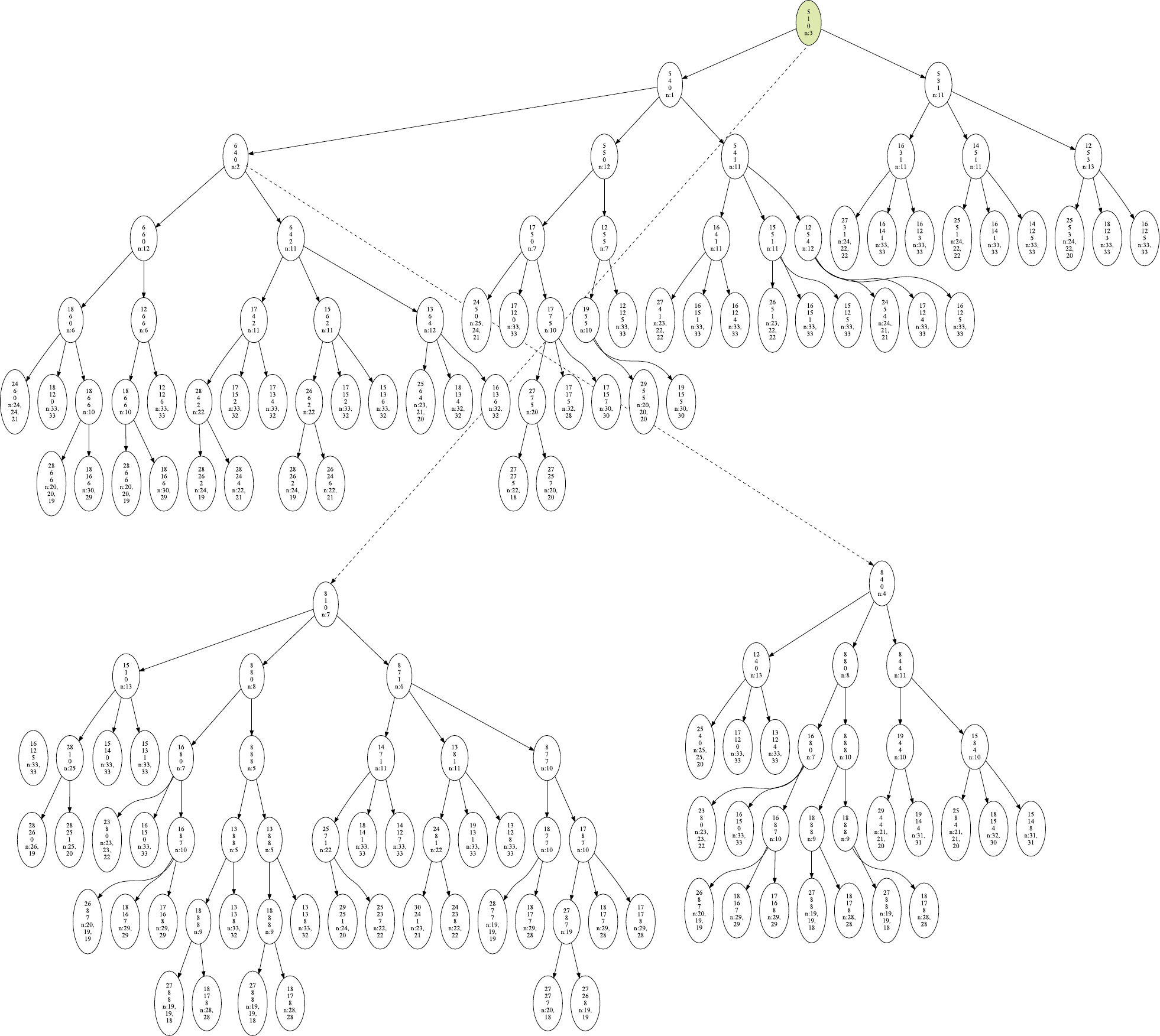}
\caption{Game tree for the lower bound of $45/33$, starting with the bin configuration $(5,1,0,\{4,1,1\})$.}
\end{figure}

\begin{figure}
\centering
  \includegraphics[scale=0.9]{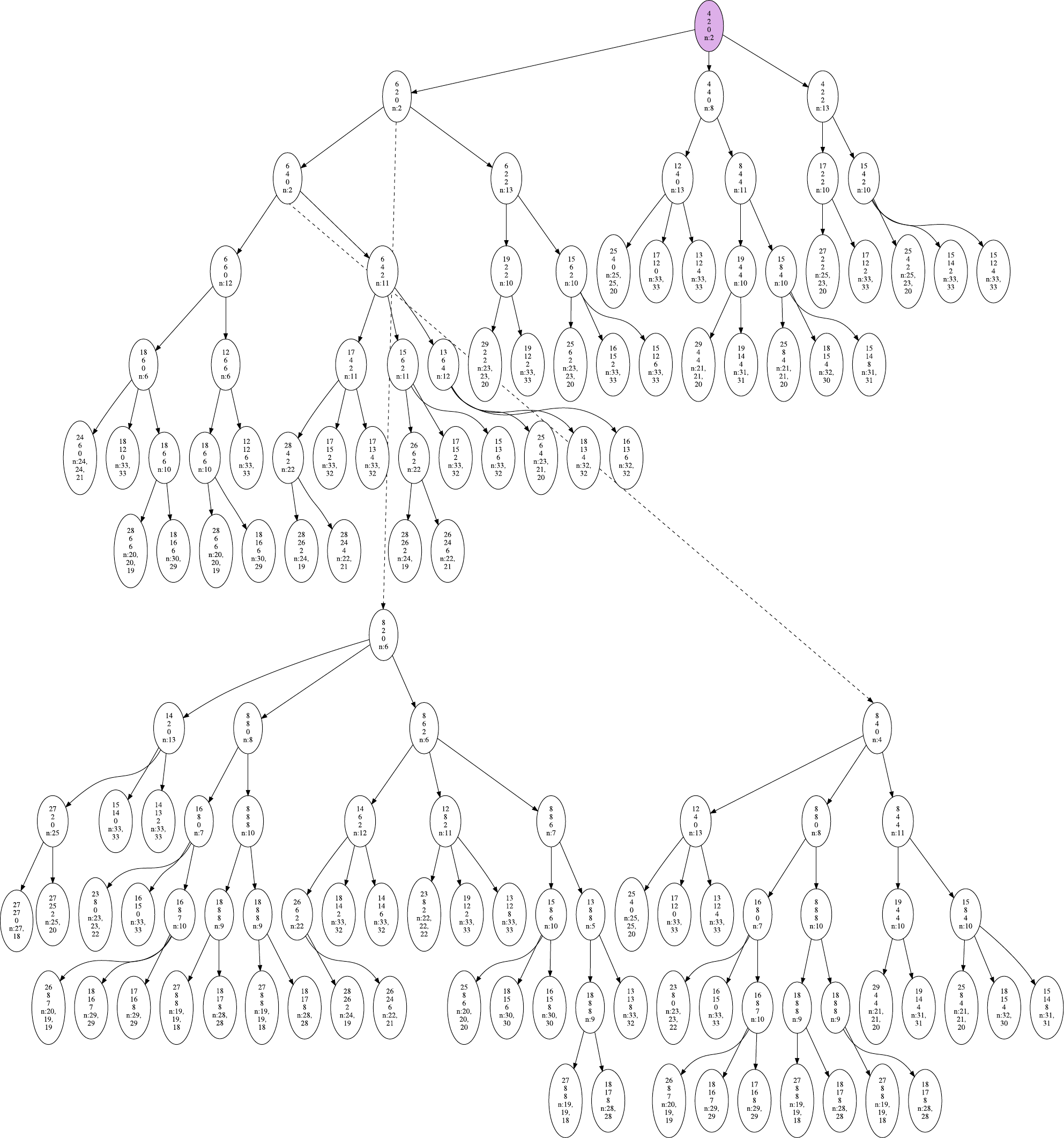}
  \caption{Game tree for the lower bound of $45/33$, starting with the bin configuration $(4,2,0,\{4,1,1\})$.}
\end{figure}
\restoregeometry

\end{document}